\title{Comparing representations for function spaces in computable analysis}
\author{
Arno Pauly
\institute{Computer Laboratory\\ University of Cambridge, United Kingdom \\ \& \\
D\'epartement d'Informatique\\ Universit\'e Libre de Bruxelles, Belgium
\email{Arno.Pauly@cl.cam.ac.uk}}
\and
Florian Steinberg
\institute{Fachbereich Mathematik\\ TU-Darmstadt, Germany
\email{steinberg@mathematik.tu-darmstadt.de}}
}
	\newtheorem{theorem}{Theorem}
	\newtheorem{definition}[theorem]{Definition}
	\newtheorem{corollary}[theorem]{Corollary}
	\newtheorem{proposition}[theorem]{Proposition}
	\newtheorem{lemma}[theorem]{Lemma}
	\newtheorem{example}[theorem]{Example}
	\newcommand{\Wd}[1]{\left[#1\right]}
	\newcommand{\dom}{\operatorname{dom}}
	\newcommand{\adv}{\operatorname{Adv}}
	\newcommand{\id}{\textnormal{id}}
	\newcommand{\Cantor}{{\{0, 1\}^\mathbb{N}}}
	\newcommand{\Baire}{{\mathbb{N}^\mathbb{N}}}
	\newcommand{\uint}{{[0,1]}}
	\newcommand{\hide}[1]{}
	\newcommand{\mto}{\rightrightarrows}
	\newcommand{\lpo}{\textrm{LPO}}
	\newcommand{\leqW}{\leq_{\textrm{W}}}
	\newcommand{\leW}{<_{\textrm{W}}}
	\newcommand{\equivW}{\equiv_{\textrm{W}}}
	\newcommand{\CCN}{\mathrm{C}_{\NN}}
	\newcommand{\demph}{\textbf}
	\newcommand{\C}{\textrm{C}}
	\newcommand{\NN}{\mathbb{N}}
	\newcommand{\RR}{\mathbb{R}}
	\newcommand{\CC}{\mathbb{C}}
	\newcommand{\DD}{\mathbb{D}}
	\newcommand{\OO}{\mathcal{O}}
	\newcommand{\XX}{\mathbf{X}}
	\newcommand{\YY}{\mathbf{Y}}
	\newcommand{\EE}{\mathcal E}
	\newcommand{\SF}{\mathcal S}
	\newcommand{\TF}{\mathcal D}
	\newcommand{\A}{\mathcal A}
	\newcommand{\abs}[1]{\left|#1\right|}
	\newcommand{\germs}{\OO}
	\newcommand{\Sum}{\operatorname{Sum}}
	\newcommand{\analytic}{\mathcal C^\omega(D)}
	\newcommand{\cont}{\mathcal C(D)}
	\newcommand{\Advc}{\operatorname{Adv}_{\C^\omega}}
	\newcommand{\Advg}{\operatorname{Adv}_{\germs}}
	\newcommand{\Count}{\operatorname{Count}}
	\newcommand{\Bound}{\operatorname{Bound}}
	\newcommand{\Germ}{\operatorname{Germ}}
	\newcommand{\Diff}{\operatorname{Diff}}
	\newcommand{\supp}{\operatorname{supp}}
	\newcommand{\Monic}{\operatorname{Monic}}
	\newcommand{\Zeros}{\operatorname{Zeros}}
	\newcommand{\dbnd}{\operatorname{Dbnd}}
	\newcommand{\card}[1]{\#{#1}}
	\newcommand{\norm}[1]{\left\|#1\right\|}
	\newcommand{\bnorm}[1]{\big\|#1\big\|}
\begin{document} \theoremstyle{definition}

	\maketitle

	\begin{abstract}
		This paper compares different representations (in the sense of computable analysis) of a number of function spaces that are of interest in analysis.
		In particular subspace representations inherited from a larger function space are compared to more natural representations for these spaces.
		The formal framework for the comparisons is provided by Weihrauch reducibility.

		The centrepiece of the paper considers several representations of the analytic functions on the unit disk and their mutual translations.
		All translations that are not already computable are shown to be Weihrauch equivalent to closed choice on the natural numbers.
		Subsequently some similar considerations are carried out for representations of polynomials.
		In this case in addition to closed choice the Weihrauch degree $\lpo^*$ shows up as the difficulty of finding the degree or the zeros.
		As a final example, the smooth functions are contrasted with functions with bounded support and Schwartz functions.
		Here closed choice on the natural numbers and the $\lim$ degree appear.
	\end{abstract}
	\tableofcontents

	\section{Introduction}
		In order to make sense of computability questions in analysis, the spaces of objects involved have to be equipped with representations:
		A representation determines the information that is provided (or has to be provided) when computing on these objects.
		When changing from a more general to more restrictive setting, there are two options:
		Either to merely restrict the scope to the special objects and retain the representation, or to introduce a new representation containing more information.

		As a first example of this, consider the closed subsets of $\uint^2$ and the closed convex subsets of $\uint^2$ (following \cite{paulyleroux}).
		The former are represented as enumerations of open balls exhausting the complement.
		The latter are represented as the intersection of a decreasing sequence of rational polygons.
		Thus, prima facie the notions of a \emph{closed set which happens to be convex} and a \emph{convex closed set} are different.
		In this case it turns out they are computably equivalent after all (the proof, however, uses the compactness of $\uint^2$).

	\subsection{Summary of the results}
		This paper presents different examples of the same phenomenon:
		In \Cref{sec:analytic} the difference between an \emph{analytic function} and a \emph{continuous functions that happens to be analytic} is investigated for functions on a fixed compact domain.
		It is known that these actually are different notions.
		The results quantify how different they are using the framework of Weihrauch reducibility.
		The additional information provided for an analytic function over a continuous function can be expressed by a single natural number.
		Thus, this is an instance of computation with discrete advice as introduced in \cite{MR2915702}.
		Finding this number is Weihrauch equivalent to $\CCN$.
		This means that while the number can be chosen to be falsifiable (i.e.~wrong values can be detected), this is the only computationally relevant restriction on how complicated the relationship between object and associated number can be.
		The results are summarized in \Cref{figure:reductions} on Page \pageref{figure:reductions}

		\Cref{sec:polynomials} considers \emph{continuous functions that happen to be polynomials} versus \emph{analytic functions that happen to be polynomials} versus \emph{polynomials}.
		All translations turn out to be either computable, or Weihrauch equivalent to one of the two well-studied principles $\CCN$ and $\lpo^*$.
		The results are summarized in \Cref{figure:reductions for polynomials} on Page \pageref{figure:reductions for polynomials}.

		The last \Cref{sec:schwartz} changes the setting in that it swaps the compact subset of the complex plane as domain for the real line.
		It contrasts the spaces of smooth functions, Schwartz functions and bump functions.
		While going from smooth (or Schwartz) to a bump function is equivalent to $\CCN$, going from a smooth function that happens to be Schwartz to a Schwartz function is equivalent to the Weihrauch degree $\lim$.
		This degree captures the Halting problem.
		In particular it follows that there is a function $f\in\C^\infty(\RR)$ that decays faster than any polynomial (i.e. $f\in\SF$) and is computable as element of $C^\infty(\RR)$, but as element of $\SF$ is not only  non-computable, but computes the Halting problem.

		We briefly mention two alternative perspectives on the phenomenon:
		First, recall that in intuitionistic logic a negative translated statement behaves like a classical one, and that double negations generally do not cancel.
		In this setting the difference boils down to considering either analytic functions or continuous functions that are not not analytic.
		Second, from a topological perspective, Weihrauch equivalence of a translation to $\CCN$ implies that the topologies induced by the representations differ.
		Indeed, the suitable topology on the space of analytic functions is not just the subspace topology inherited from the space of continuous functions but a direct limit topology.

		An extended abstract based on this paper can be found as \cite{pauly-steinberg-csr}.

		\subsection{Represented spaces}
			This section provides a very brief introduction to the required concepts from computable analysis.
			For a more in depth introduction and further information, the reader is pointed to the standard textbook in computable analysis \cite{MR1795407}, and to \cite{pauly-synthetic}.
			Also, \cite{MR1005942} should be mentioned as an excellent source, even though the approach differs considerably from the one taken here.
		
			Recall that a \demph{represented space} $\mathbf{X} = (X, \delta_\mathbf{X})$ is given by a set $X$ and a partial surjection $\delta_\mathbf{X}: \subseteq \Baire \to X$ from Baire space onto it.
			The elements of $\delta^{-1}_{\XX}(x)$ should be understood as encodings of $x$ and are called the \demph{$\XX$-names} of $x$.
			Each represented spaces inherits a topology from Baire space:
			The final topology of the chosen representation.
			We usually refrain from mentioning the representation of a represented space in the same way as the topology of a topological space is usually not mentioned.
			For instance the set of natural numbers is regarded as a represented space with the representation $\delta_{\NN}(p) := p(0)$.
			Therefore, from now on denote by $\NN$ not only the set or the topological space, but the \demph{represented space of natural numbers}.
			If a topological space is to represented, the representation should be chosen such that it fits the topology as good as possible.
			For instance for the case $\NN$ above, the final topology of the representation is the discrete topology.

			If $\XX$ is a represented space and $Y$ is a subset of $\XX$, then $Y$ can be turned into a represented space by equipping it with the range restriction of the representation of $\XX$.
			Denote the represented space arising in this way by $\XX|_{Y}$.
			We use the same notation $\XX|_{\YY}$ if $\YY$ is a represented space already.
			In this case, however, no information about the representation of $\YY$ is carried over to $\XX|_{\YY}$.
			
			Recall that a multivalued function $f$ from $X$ to $Y$ (or $\XX$ to $\YY$) is an assignment that assigns to each element $x$ of its domain a set $f(x)$ of acceptable return values.
			Multivaluedness of a function is indicated by $f:\XX\mto\YY$.
			The domain of a multivalued function is the set of elements such that the image is not empty.
			Furthermore, recall that $f:\subseteq \XX\to\YY$ indicates that the function $f$ is allowed to be partial, i.e. that its domain may be a proper subset of $\XX$.

			\begin{definition}
				A partial function $F : \subseteq \Baire \to \Baire$ is a \demph{realizer} of a multivalued function $f : \subseteq\mathbf{X} \mto \mathbf{Y}$ if $\delta_\mathbf{Y}(F(p)) \in f(\delta_\mathbf{X}(p))$ for all $p \in \delta_{\XX}^{-1}(\dom(f))$ (compare \Cref{figure:diagram}).
			\end{definition}
			
			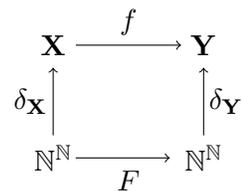
\begin{wrapfigure}{r}{3.4cm}
				\vspace{-.5cm}
				\begin{tikzpicture}
					\node at (0,0) {$\Baire$};
					\node at (0,1.5) {$\XX$};
					\node at (2,0) {$\Baire$};
					\node at (2,1.5) {$\YY$};
					\draw[->] (0.3,1.5) -- (1.7,1.5);				
					\draw[->] (0,0.3) -- (0,1.2);
					\draw[->] (2,0.3) -- (2,1.2);
					\draw[->] (0.3,0) -- (1.55,0);
					\node at (1,1.8) {$f$};
					\node at (-.3,.75) {$\delta_{\XX}$};
					\node at (2.3,.75) {$\delta_{\YY}$};
					\node at (1,-.3) {$F$};
				\end{tikzpicture}
				\caption{A diagram}\label{figure:diagram}
				\vspace{-.8cm}
			\end{wrapfigure}
			A (multivalued) function between represented spaces is called computable if it has a computable realizer, where computability on Baire space is defined via oracle Turing machines (as in e.g.~\cite{MR1673610}) or via Type-2 Turing machines (as in e.g.~\cite{MR1795407}).
			We call a (multivalued) function between represented spaces continuous, if it has a continuous realizer.
			For single valued functions on admissibly represented spaces (in the sense of Schr\"oder \cite{schroder}), this notion coincides with topological continuity.
			All representations discussed in this paper are admissible.		

			The remainder of this section introduces the represented spaces that are needed for the content of the paper.

			\subsubsection*{Sets of natural numbers.}

				Let $\OO(\NN)$ resp.\ $\A(\NN)$ denote the \demph{represented spaces of open} resp.\ \demph{closed subsets of $\NN$}.
				The underlying set of both $\OO(\NN)$ and $\A(\NN)$ is the power set of $\NN$.
				The representation of $\OO(\NN)$ is defined by
				\[ \delta_{\OO(\NN)}(p) = O \quad\Leftrightarrow\quad O = \{p(n)-1\mid n\in\NN, p(n) > 0\}. \]
				That is: A $\OO(\NN)$-name of a set of natural numbers is an enumeration of the set, however the enumeration is allowed to not return an element of the set in each step (otherwise no finite set would get a name).
				The closed sets $\A(\NN)$ are represented as complements of open sets:
				\[ \delta_{\A(\NN)}(p) = A \quad\Leftrightarrow\quad \delta_{\OO(\NN)}(p) = A^c. \]
				I.e. a $\A(\NN)$-name of a set of natural numbers is an enumeration of the complement.


			\subsubsection*{Computable metric spaces, $\RR$, $\CC$, $\cont$.}
				Given a triple $\mathcal M = (M,d,(x_n)_{n\in \NN})$ such that $(M,d)$ is a separable metric space and $x_n$ is a dense sequence, turn $\mathcal M$ into a represented space by equipping it with the representation
				\[ \delta_{\mathcal M}(p) = x \quad \Leftrightarrow\quad \forall n\in \NN: d(x,x_{p(n)}) < 2^{-n}. \]
				This is the canonical way of considering $\RR$, $\RR^d$ and $\CC$ as represented spaces; the dense sequences are standard enumerations of the rational elements.

				The space $\cont$ of continuous functions on a compact subset $D$ of $\RR^d$ is a separable metric space and thus a represented space.
				The metric is the one induced by the supremum norm and the dense sequences are standard enumerations of the polynomials with rational coefficients.
				The computable Weierstra\ss\ approximation theorem states that a function is computable as element of $C([0,1])$ if and only if it is computable in the sense of realizers as a function between the represented spaces $\uint$ and $\RR$ respectively.

			\subsubsection*{Sequences in a represented space.}

				For a represented space $\XX$ there is a canonical way to turn the set of sequences in $\XX$ into a \demph{represented space $\XX^\NN$}:
				Let $\langle\cdot,\cdot\rangle:\NN\times \NN \to \NN$ be a standard paring function (i.e. bijective, recursive with recursive projections).
				Define a function $\langle\cdot\rangle:\left(\Baire\right)^\NN\to \Baire$ by
				\[ \langle (p_k)_{k\in \NN}\rangle(\langle m,n\rangle) := p_m(n). \]
				For a represented space $\XX$ define a representation of the set $X^\NN$ of the sequences in the set $X$ underlying $\XX$ by
				\[ \delta_{\XX^\NN}(\langle (p_k)_{k\in\NN}\rangle) =(x_k)_{k\in\NN} \quad \Leftrightarrow \quad \forall m\in\NN: \delta_{\XX}(p_m) = x_m. \]
				I.e. $p$ is a name of $(x_m)_{m\in\NN}$ if for each fixed $m$ the mapping $n\mapsto p(\langle m,n\rangle)$ is a name of $x_m$.
				In particular the spaces $\RR^\NN$ and $\CC^\NN$ of real and complex sequences are considered represented spaces in this way.
				For a partial, multivalued function $f:\subseteq\XX\mto\YY$ let $f^\NN:\subseteq\XX^\NN\mto\YY^\NN$ denote the function defined by $f^\NN((x_n)_{n\in\NN}):= (f(x_n))_{n\in\NN}$.

		\subsection{Weihrauch reducibility}
			This section provides a brief introduction to Weihrauch reducibility.
			The research programme of Weihrauch reducibility was formulated in \cite{MR2099383}, a more up-to-date introduction can be found in \cite{MR3350999}.

			Every multivalued function $f:\subseteq\XX\mto\YY$ corresponds to a computational task.
			Namely: \lq given information about $x$ and the additional assumption $x\in \dom(f)$ find suitable information about some $y\in f(x)$\rq.
			What information about $x$ resp.\ $f(x)$ is provided resp.\ asked for is reflected in the choice of the representations for $\XX$ and $\YY$.
			The following example of this is very relevant for the content of this paper:
			\begin{definition}
				Let \demph{closed choice on the integers} be the multivalued function $\C_\mathbb{N} : \subseteq \mathcal{A}(\mathbb{N}) \mto \mathbb{N}$ defined on nonempty sets by
				\[ y\in\C_\mathbb{N}(A)\Leftrightarrow y\in A. \]
			\end{definition}
			The corresponding task is \lq given an enumeration of the complement of a set of natural numbers and provided that it is not empty, return an element of the set\rq.
			$\C_\NN$ does not permit a computable realizer:
			Whenever a machine decides that the name of the element of the set should begin with $n$, it has only read a finite beginning segment of the enumeration.
			The next value might as well be $n$.

			From the point of view of multivalued functions as computational tasks, it makes sense to compare their difficulty by comparing the corresponding multivalued functions.
			This paper uses Weihrauch reductions as formalization of such a comparison.
			Weihrauch reductions define a rather fine pre-order on multivalued functions between represented spaces.

			\begin{wrapfigure}{r}{3.5cm}
				\vspace{-.2cm}
				\begin{tikzpicture}
					\node at (.5,4.75) {\small{name of some $y\in f(x)$}};
					\draw (-1.25,-.5) rectangle (2.25,4.25);
					\draw (0,0) rectangle (2,1);
					\node at (1,.5) {$H$};
					\draw[->] (0,-.5) -- (.25,0);
					\node at (.5,-1.25) {name of $x\in \dom(f)$};
					\draw[->] (.25,1) -- (.25,1.5);
					\node at (1.25,1.25) {name of $z$};
					\draw (0,1.5) rectangle (2,2.5);
					\node at (1,2) {$G$};
					\node at (1.25,2.75) {\small{name of $g(z)$}};
					\draw[->] (.25,2.5) -- (.25,3);
					\draw[->] (0,-1) -- (0,-.5) -- (-.5,.5) -- (-.5,2) --(-.25,2.5) --(-.25,3);
					\draw (-1,3) rectangle (1,4);
					\node at (0,3.5) {$K$};
					\draw[->] (0,4) -- (0,4.5);
					\node at (1.6,3.4) {F};
				\end{tikzpicture}
				\caption{Weihrauch reductions}\label{fig:Wheirauch reduction}
				\vspace{-.8cm}
			\end{wrapfigure}
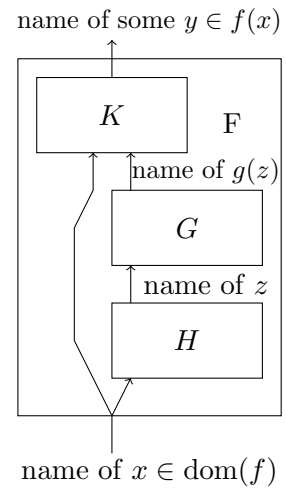
			
			\begin{definition}\label{def:weihrauch}
				Let $f$ and $g$ be partial, multivalued functions between represented spaces.
				Say that $f$ is \demph{Weihrauch reducible} to $g$, in symbols $f\leqW g$, if there are computable
				functions $K:\subseteq \Baire\times \Baire \to\Baire$ and $H:\subseteq\Baire\to\Baire$ such that whenever $G$ is a realizer of $g$, the function $F := \left (p \mapsto K(p,G(H(p)))\right )$ is a realizer for $f$.
			\end{definition}

			$H$ is called the \demph{pre-processor} and $K$ the \demph{post-processor} of the Weihrauch reduction.
			This definition and the nomenclature is illustrated in \Cref{fig:Wheirauch reduction}.
			The relation $\leqW$ is reflexive and transitive.
			We use $\equivW$ to denote that reductions in both directions exist and $\leW$ if this is not the case.
			The equivalence class of a multivalued function with respect to the equivalence relation $\equivW$ is called the \demph{Weihrauch degree} of a function $f$ and we denote it by $\left[f\right]$.
			We still use $\leqW$ for the induced partial order on the Weihrauch degrees and by abuse of notations sometimes use it to compare multivalued functions and Weihrauch degrees.
			A Weihrauch degree is called non-computable if it contains no computable function.

			The Weihrauch degree corresponding to $\C_{\NN}$ has received significant attention, e.g.~in \cite{MR2760117,MR2915694,paulymaster,mylatz,mylatzb,MR3350999,paulyoracletypetwo,mummert,paulyneumann}.
			In particular, as shown in \cite{paulydebrecht}, a function between computable Polish spaces is Weihrauch reducible to $\C_\mathbb{N}$ if and only if it is piecewise computable or equivalently is effectively $\Delta^0_2$-measurable.
			
			For the purposes of this paper, the following representatives of this degree are also relevant:
			\begin{lemma}[\cite{pauly-fouche2}]\label{lemma:cn}
				The following are Weihrauch equivalent:
				\begin{itemize}
					\item $\CCN$, that is closed choice on the natural numbers.
					\item $\max : \subseteq \mathcal{O}(\mathbb{N}) \to \mathbb{N}$ defined on the bounded sets in the obvious way.
					\item $\operatorname{Bound} : \subseteq \mathcal{O}(\mathbb{N}) \mto \mathbb{N}$, where $n \in \operatorname{Bound}(U)$ iff $\forall m \in U: n \geq m$.
				\end{itemize}
			\end{lemma}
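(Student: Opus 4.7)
The plan is to close a cycle of three Weihrauch reductions, $\Bound \leqW \CCN \leqW \max \leqW \Bound$, from which all six equivalences follow. All three share a common intuition: in each of these representations information arrives in finite increments that eventually stabilise, and the reductions trade \emph{what} value is being stabilised for \emph{when} the stabilisation occurs.

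For $\Bound \leqW \CCN$, the pre-processor converts an enumeration of a bounded $U \in \OO(\NN)$ into a name of the closed set $A := \{n \mid \forall m \in U,\ n \geq m\}$ of upper bounds, by listing $0,1,\dots,u-1$ into $A^c$ whenever a new $u$ appears in the enumeration of $U$. Since $U$ is bounded, $A$ is nonempty, and any element returned by $\CCN(A)$ is an upper bound of $U$, so the post-processor is the identity.

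For $\CCN \leqW \max$, starting from an enumeration $p$ of $A^c$ with $A \in \A(\NN)$ nonempty, I would let $m_t$ denote the least natural not yet enumerated into $A^c$ after $t$ stages; the sequence $(m_t)$ is nondecreasing and eventually constantly equal to $\min A$. The pre-processor enumerates $\{0,1,\dots,m_t\}$ into the open set $V$ at stage $t$, producing a name for the bounded open set $V = \{0,1,\dots,\min A\}$. Then $\max V = \min A \in A$, so forwarding $\max V$ unchanged realises $\CCN$.

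For $\max \leqW \Bound$ — the subtle step — given an enumeration $p$ of a nonempty bounded $U$, let $M_t$ be the largest element of $U$ appearing in the first $t$ entries of $p$. The sequence $(M_t)$ is nondecreasing and stabilises at $\max U$ in finite time, so the set $V := \{t \mid M_{t+1} > M_t\}$ of jump-times is finite and thus a bounded open subset of $\NN$ enumerable from $p$. Feeding $V$ to $\Bound$ returns some $y$ above the last jump-time, whence $M_{y+1} = \max U$; the post-processor reads $p[0..y]$, computes $M_{y+1}$, and returns it. The obstacle here is exactly that $\Bound$ returns some upper bound of $V$, not $\max V$ itself, so one cannot simply read $\max U$ off the answer. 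The resolution is to encode $\max U$ not as the value of an element of $V$ but as the value to which $(M_t)$ stabilises, converting the question "what is $\max U$?" into "by what time has $(M_t)$ stabilised?", which $\Bound$ can answer because \emph{any} upper bound on the jump-time set certifies that stabilisation has already taken place.
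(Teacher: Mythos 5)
The paper does not actually prove this lemma; it is imported verbatim from the cited reference, so there is no internal proof to compare against. Judged on its own, your cycle $\Bound \leqW \CCN \leqW \max \leqW \Bound$ is correct and complete, and the individual reductions are the standard ones (the second is essentially the same stabilisation trick the paper uses for $\CCN \leqW \Count$ in its proof of the $\Count$ lemma, and the third is the same ``convert a value question into a stabilisation-time question'' device that reappears throughout Sections 2--4). Two small points are worth making explicit. First, in the reduction $\max \leqW \Bound$ the quantity $M_t$ is undefined until the first genuine element of $U$ appears in the enumeration (recall that a $\OO(\NN)$-name may output $0$ to signal ``no element this step''), so you should fix a convention such as $M_t = -1$ before that point; with this convention the first appearance is itself a jump time, $V$ is nonempty, and the rest of your argument goes through unchanged. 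Second, you tacitly assume that the domain of $\max$ consists of the \emph{nonempty} bounded sets (otherwise $\max$ is undefined), which is the intended reading of ``defined on the bounded sets in the obvious way''; under that reading the set $V = \{0,\dots,\min A\}$ you construct in the second reduction is indeed in the domain. Neither point is a gap, only a matter of stating conventions.
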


			Given $p\in\Baire$ denote the support of $p$ by $\supp(p) := \{n\in\NN\mid p(n)>0\}$.
			Furthermore, for a finite set $A$ denote the number of elements of that set by $\card A$.
			\begin{lemma}\label{lemma:count}
				The function $\operatorname{Count}:\subseteq \Baire \to \mathbb{N}$, defined via
				\[ \dom(\operatorname{Count}) := \{p \in \Baire \mid \supp(p) \text{ is finite}\} \quad\text{and}\quad \operatorname{Count}(p) := \card{\supp(p)} \]
				is Weihrauch equivalent to closed choice on the naturals $\CCN$.
			\end{lemma}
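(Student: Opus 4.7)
The plan is to prove $\Count \leqW \CCN$ and $\CCN \leqW \Count$ separately, routing in both cases through the representative $\max : \subseteq \OO(\NN) \to \NN$ from \Cref{lemma:cn}.

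For $\Count \leqW \max$, the pre-processor would convert a name of $p \in \dom(\Count)$ into an enumeration of the bounded set $B_p := \{0, 1, \ldots, \card{\supp(p)}\} \subseteq \NN$. It first emits $0$, then maintains a counter $c$, initially $0$: for each $n$, if $p(n) > 0$, it increments $c$ and emits the new value. Since $\supp(p)$ is finite, $c$ stabilises at $\card{\supp(p)}$; the enumerated set is exactly $B_p$, which is nonempty and bounded, with $\max(B_p) = \card{\supp(p)}$. The post-processor just forwards the value returned by $\max$.

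For $\max \leqW \Count$, the pre-processor would turn an enumeration $q$ of a nonempty bounded $U \subseteq \NN$ into some $p \in \Baire$ with $\card{\supp(p)} = \max(U) + 1$. Two state variables are kept: the running maximum $M$ of the elements of $U$ observed so far (initialised to $-1$, meaning none yet seen) and the number $c$ of $1$s already written to the output (initialised to $0$). At each step one symbol $q(n)$ is read, $M$ is updated via $M \leftarrow \max(M, q(n) - 1)$ whenever $q(n) > 0$, and then $(M + 1) - c$ ones followed by a single $0$ are appended to $p$, after which $c \leftarrow M + 1$. Because $U$ is finite, $M$ stabilises at $\max(U)$ and only zeros are appended thereafter; the total number of $1$s written is $\max(U) + 1$, so that $\Count(p) = \max(U) + 1$. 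The post-processor then subtracts~$1$.

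The principal obstacle is making this second construction properly computable as a function $\Baire \to \Baire$. The naive attempt declaring $p(n) = 1$ iff $n \leq \max(U)$ fails, since $\max(U)$ is not computable from $q$ and deciding any single $p(n)$ would require exactly this information. The interleaved algorithm circumvents the issue by writing at least one $0$ per loop iteration, so the output position advances regardless of how $q$ unfolds, while placements of $1$s are deferred until enough of $q$ has been read; retractions are never needed, because $M$ only grows. Verifying that the resulting $H$ emits a total sequence and that $\card{\supp(p)}$ indeed equals $\max(U) + 1$ is then a routine bookkeeping check.
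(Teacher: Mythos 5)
Your proof is correct, and it is broadly in the same spirit as the paper's (both directions are routed through \Cref{lemma:cn}), but the actual constructions differ enough to be worth comparing. For $\Count\leqW\CCN$ the paper enumerates $\supp(p)$ itself as an $\OO(\NN)$-set, applies $\max$, and then lets the post-processor count the elements of $\supp(p)$ below the returned maximum; you instead enumerate the set $\{0,1,\ldots,\card{\supp(p)}\}$ of running counts, so that $\max$ already returns the answer and the post-processor is trivial. The more substantial divergence is in the hard direction: the paper reduces $\CCN$ to $\Count$ directly, taking an enumeration of the complement of a set $A$ and using a self-referential pre-processor whose support grows exactly until its size reaches the least element not yet excluded, so that $\Count$ returns $\min A$. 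You instead reduce from $\max$ (invoking \Cref{lemma:cn} a second time), emitting additional ones whenever the running maximum of the enumerated bounded set increases, so that the support ends up with exactly $\max(U)+1$ elements. Your variant buys a more transparent computability argument for the pre-processor --- the running maximum is monotone, so, as you observe, no retraction is ever needed, whereas the paper's pre-processor is defined by a recursion on its own past outputs whose totality and finite support require a short argument. The price is that your reduction only yields $\CCN\leqW\Count$ via the equivalence $\CCN\equivW\max$ rather than exhibiting a reduction from closed choice in its native formulation; since \Cref{lemma:cn} is already established, this is a perfectly legitimate shortcut.
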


			\begin{proof}
				First construct the Weihrauch reduction that proves $\CCN\leqW\operatorname{Count}$:
				Let the pre-processor $H$ be the function sending some $p\in \dom(\CCN)$ to the function that returns $1$ on input $n$ whenever its support up to $n$ has fewer elements than the least element that has not been excluded from the set by the first $n$ elements of the enumeration of the complement.
				This function is computable as can be seen from its recursive definition:
				\[ H(p)(n) := \begin{cases} 1 & \text{ if }\min\{m\mid\forall j\leq n: p(j)-1\neq m\} > \card{\{j < n \mid H(p)(j) = 1\}} \\ 0 & \text{ otherwise.}\end{cases} \]
				$H(p)$ has finite support, since the set described by $p$ is nonempty:
				There is some $m$ never shows up as value of $p$ and by definition the support of $H(p)$ does not outgrow that number.
				Applying a realizer of $\operatorname{Count}$ to $H(p)$ returns an encoding of the least element of the set encoded by $p$.
				To obtain a Weihrauch reduction just pass this number on to be the output via $K(p,q)(n) := q(0)$.
				
				For the opposite direction, i.e. $\Count\leqW\CCN$ use \Cref{lemma:cn} and replace $\CCN$ by $\max$.
				Define the pre-processor $H$ of a Weihrauch reduction $\Count\leqW\max$ as follows:
				\[ H(p)(n) := \begin{cases} n+1 & \text{ if }p(n) > 0 \\ 0 & \text{otherwise.} \end{cases} \]
				This means that $H(p)$ is a $\OO(\NN)$ name of $\supp(p)$.
				Applying $\max$ will give the maximal element of the support.

				Define the post-processor $K$ to be the function
				\[ K(p,q)(n) := \card{\{m\mid m \leq q(0)\text{ and } p(m)>0\}}. \]
				This function is computable and since $q(0)$ will always be the maximal element of the support of $p$, the composition counts the support of $p$ and is a Weihrauch reduction.
			\end{proof}

			In Section \ref{sec:polynomials} another non-computable Weihrauch degree is encountered: $\lpo^*$.
			Here, $\lpo$ is short for \lq limited principle of omniscience\rq.
			We refrain from stating $\lpo^*$ explicitly as it would need more machinery than we introduced.
			Instead we characterize it by specifying the representative that is used in the proofs:
			\begin{proposition}[{\cite[Korollar 3.1.4.6]{paulymaster}}]\label{resu:lpo*}
				The function $\min : \Baire \to \mathbb{N}$ defined on all of Baire space in the obvious way is a representative of the Weihrauch degree $\lpo^*$.
			\end{proposition}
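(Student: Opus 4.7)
The plan is to establish both $\min\leqW\lpo^*$ and $\lpo^*\leqW\min$ directly. An instance of $\lpo^*$ consists of a natural number $n$ together with a tuple $(p_1,\ldots,p_n)\in(\Baire)^n$, with $n$ computable from the input, and the answer reports for each $i$ whether $p_i=0^\NN$. Both reductions exploit that $\min p$ is bounded above by $p(0)$.

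For $\min\leqW\lpo^*$, the pre-processor $H$ first reads $p(0)$ and then outputs the $\lpo^*$-instance consisting of $p(0)+1$ auxiliary sequences $q_0,\ldots,q_{p(0)}\in\Baire$ defined by $q_k(j):=1$ if $p(j)=k$ and $q_k(j):=0$ otherwise. The $k$-th bit of the $\lpo^*$-answer then tells us whether the value $k$ fails to occur in $p$, so the post-processor $K$ returns the least $k\leq p(0)$ that is reported as occurring. This set is nonempty since $p(0)$ itself occurs in $p$ at position $0$.

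For the converse $\lpo^*\leqW\min$, given an instance $(n,p_1,\ldots,p_n)$ the pre-processor outputs the sequence
\[ q(j) \;:=\; \sum_{i=1}^{n} 2^{i-1}\, a_{i,j}, \]
where $a_{i,j}:=1$ if $p_i(0)=\cdots=p_i(j)=0$ and $a_{i,j}:=0$ otherwise. Since each $a_{i,j}$ is nonincreasing in $j$, so is $q$, and hence $\min q=\lim_{j\to\infty}q(j)=\sum\{2^{i-1}:p_i=0^\NN\}$. The post-processor reads off the $i$-th binary digit of this value to recover the $i$-th component of the $\lpo^*$-answer.

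The only genuinely delicate point is to keep the number of parallel $\lpo$-calls in the first reduction bounded by a quantity computable from the input, so that the reduction stays within finite parallelisation $\lpo^*$ rather than dropping into the strictly stronger full parallelisation $\widehat{\lpo}$. Choosing $p(0)$ as this bound is natural and simultaneously ensures that the set over which the post-processor minimises is nonempty; the converse direction is then essentially a book-keeping exercise in the binary encoding.
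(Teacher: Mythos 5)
Your proof is correct. Note that the paper itself does not prove this proposition at all --- it is imported verbatim from the cited reference ([Korollar 3.1.4.6] of the first author's thesis), so there is no in-paper argument to compare against; your write-up supplies a self-contained proof. Both directions check out: for $\min\leqW\lpo^*$, the number of parallel $\lpo$-instances, $p(0)+1$, is computable from the input before any oracle answer is needed, which is exactly what keeps you inside $\lpo^*$ rather than $\widehat{\lpo}$, and the least $k\leq p(0)$ reported as occurring is indeed $\min p$ because $p(0)$ itself occurs. For $\lpo^*\leqW\min$, the sequence $q$ is nonincreasing and integer-valued, so its minimum equals its limit $\sum\{2^{i-1}: p_i=0^{\mathbb{N}}\}$, from which the individual answers are recovered as binary digits; each $q(j)$ depends only on finitely many values of the $p_i$, so the pre-processor is computable. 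The argument is essentially the standard one for this folklore equivalence.
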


			A third and final Weihrauch degree making an appearance as the degree of a translation is the degree $\lim$ encountered in Section \ref{sec:schwartz}.

			\begin{definition}
				Let $\mathbf{X}$ be a computable metric space.
				Then $\lim_\mathbf{X} : \subseteq \mathbf{X}^\mathbb{N} \to \mathbf{X}$ maps a converging sequence to its limit.
			\end{definition}

			As shown in \cite{MR2915694}, $\lim_\mathbb{N} \equivW \C_\mathbb{N}$.
			In general, it holds that $\lim_\mathbf{X} \leqW \lim_\Baire \equivW \lim_\Cantor$, and whenever $\mathbf{Y}$ is a subspace of $\mathbf{X}$, then $\lim_\mathbf{Y} \leqW \lim_\mathbf{X}$.
			As $\Cantor$ embeds into any computable metric space considered in this paper apart from $\mathbb{N}$, it suffices to consider the Weihrauch degree $\lim:=\big[{\lim_\Cantor}\big]$ corresponding to $\lim_\Cantor$.
			The degree $\lim$ is also complete for the effectively $\Sigma^0_2$-measurable functions \cite{MR2099383,pauly-descriptive}.

			To give more intuition as to why the Weihrauch degrees $\lpo^*$, $\C_\mathbb{N}$ and $\lim$ show up in this paper, note the following:
			All three are derived from the maybe simplest non-computable Weihrauch degree $\lpo$ via canonical closure operators defined on the Weihrauch degrees.
			$\lpo$ the Weihrauch degree of the characteristic function of the zero function in Baire space, namely $\lpo = \Wd{\chi_0}$, where
			\[ \chi_0(p) :=\begin{cases} 1 & \text{ if $p$ is the zero function, i.e. }\forall n:p(n)=0. \\ 0 & \text{ otherwise.} \end{cases} \]
			In computable analysis $\lpo$ shows up as the Weihrauch degree of the equality test for real (or complex) numbers.

			Now, $\lpo^*$ corresponds to carrying out a fixed finite but arbitrary high number of equality tests on the real or complex numbers via the operator $^*$ from \cite{paulyreducibilitylattice}.
			The operator $^\diamond$ introduced in \cite{paulyneumann} captures using the given degree an arbitrary finite number of times (without the requirement that the number is fixed in advance), and it holds that $\Wd{\CCN} = \lpo^\diamond$.
			Define one last operator on the Weihrauch degrees by setting $\widehat{\Wd f} := \Wd{f^\NN}$.
			This operation was also investigated in \cite{brattka2} and corresponds to a countable number of invocations of $f$ all performed in parallel.
			It holds that $\widehat{\lpo} = \widehat{\Wd{\CCN}} = \lim$.

			It is known that $\id_\Baire \leW \lpo\leW\lpo^*\leW\CCN \leW \lim$.

	\section{Analytic functions}\label{sec:analytic}
		\subsection{Representations of analytic functions}\label{sec:analytic functions}

			Recall that a function is analytic if it is locally given by a power series:
			\begin{definition}
				Let $D\subseteq \CC$ be a set.
				A function $f:D \to \CC$ is called \demph{analytic}, if for every $x_0\in D$ there is a neighborhood $U$ of $x_0$ and a sequence $(a_k)_{k\in \NN}\in \CC^\NN$ such that for each $x\in U\cap D$
				\[ f(x) = \sum_{k\in \NN} a_k (x-x_0)^k. \]
			\end{definition}
			The set of analytic functions is denoted by $\analytic$.
			Each analytic function is continuous, i.e. $\analytic\subseteq \cont$.
			If $D$ is open, the analytic functions on $D$ are smooth, i.e. infinitely often differentiable.
			Any analytic function can be analytically extended to an open superset of its domain.

			Recall that a \demph{germ} of an analytic function is a point of its domain together with the series expansion around said point.
			As long as the domain is connected, an analytic function is uniquely determined by each of its germs.
			The one to one correspondence of germs and analytic functions only partially carries over to the computability realm:
			It is well known that an analytic function on the unit disk is computable if and only if the germ around any computable point of the domain is computable (see for instance \cite{MR1137517}).
			However, the proofs of these statements are inherently non-uniform.
			The operations of obtaining a germ from a function and a function from a germ are discontinuous and therefore not computable (see \cite{Muller1995}).

			There is a more suitable representation for the analytic functions than the restriction of the representation of continuous functions.
			This representation has been investigated by different authors for instance in \cite{MR3377508},\cite{MR2207129},\cite{Muller1995}.
			For simplicity restrict to the case of analytic functions on the unit disk.
			From now on let $D$ denote the closed unit disk.
			And let $U_m$ denote the open ball  $B_{r_m}(0)$ of radius $r_m :=2^{\frac 1{m+1}}$ around zero.
			Note that $U_{m+1}\subseteq U_m$ and that the intersection of all $U_m$ is the unit disk.
			Recall from the introduction that the space $\cont$ of continuous functions is represented as a metric space (where $\CC$ is identified with $\RR^2$).
			\begin{definition}\label{def:analytic rep}
				Let $\analytic$ denote the \demph{represented space of analytic functions on $D$}, where the representation is defined as follows:
				A function $q\in \Baire$ is a name of an analytic function $f$ on $D$, if and only if $f$ extends analytically to the closure of $U_{q(0)}$, the extension is bounded by $q(0)$ and $n\mapsto q(n+1)$ is a name of $f\in \cont$.
			\end{definition}

			Note that the representation of $\analytic$ arises from the restriction of the representation of continuous functions by adding discrete additional information.
			This information is quantified by the advice function $\Advc:\subseteq\cont \to \NN$ whose domain are the analytic functions and that on those is defined by
			\begin{equation}\label{eq:AC}
				\tag{AC}
				\begin{split}
					\Advc(f) & := \{q(0)\mid q\text{ is a $\analytic$-name of $f$})\} \\
					& = \{m\in\NN\mid f\text{ has an analytic cont. to }U_m\text{ bounded by }m\}.
				\end{split}
			\end{equation}
			This function turns up in the results of this paper.
			In the terminology of \cite{MR3377508}, one would say that $\analytic$ arises from the restriction $\cont|_{\analytic}$ by enriching with the discrete advice $\Advc$.

			The topology induced by the representation of $\analytic$ is well known and used in analysis:
			It can be constructed as a direct limit topology and makes $\analytic$ a so called Silva-Space.
			For more information on this topology and its relation to computability and complexity theory also compare \cite{MR2207129}.

			The set $\OO$ of germs around zero, i.e. of power series with radius of convergence strictly larger than $1$ can be made a represented space in a very similar way:
			\begin{definition}
				Let $\germs$ denote the represented space of germs around zero with radii of convergence greater than $1$, where the representation is defined as follows: A function $q\in\Baire$ is a name of a power series $(a_k)_{k\in\NN}$, if and only if
				\[ \forall k\in\NN:|a_k|\leq 2^{-\frac k{q(0)+1}}q(0) \]
				and $n\mapsto q(n+1)$ is a name of the sequence $(a_k)_{k\in\NN}$ as element of $\CC^\NN$.
			\end{definition}

			As above, this representation is related to the restriction of the representation of $\CC^\NN$ by means of an advice function $\Advg:\subseteq \CC^\NN\mto\NN$ whose domain are the sequences with radius of convergence strictly larger than one and that is defined on those by
			\begin{equation}\label{eq:AG}
				\tag{AG}
				\begin{split}
					\Advg((a_k)_{k\in\NN}) & := \{q(0) \mid q\text{ is a $\germs$-name of $(a_k)_{k\in\NN}$}\}\\
					&= \{n\in \NN\mid \forall k\in \NN: \abs{a_k} \leq 2^{-\frac k{n+1}}\cdot n\}
				\end{split}
			\end{equation}
			Again, the topology induced by this representation is well known and used in analysis: It is the standard choice of a topology on the set of germs and can be introduced as a direct limit topology.

			Proofs that the following holds can be found in \cite{MR3377508} or \cite{Muller1995}:
			\begin{theorem}[computability of summation]\label{thm:analytic rep}
				The assignment
				\[ \OO\to \analytic, \quad (a_k)_{k\in\NN} \mapsto \left(x\mapsto \sum_k a_k x^k\right) \]
				is computable.
			\end{theorem}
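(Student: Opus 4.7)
Given a $\germs$-name $q$ of $(a_k)_{k\in\NN}$ with $n:=q(0)$, the guarantee is $|a_k|\leq n\cdot 2^{-k/(n+1)}$, so the series $\sum_k a_k x^k$ converges absolutely on $B_{2^{1/(n+1)}}(0)$, which strictly contains $D$. The plan is to produce a $\analytic$-name $q'$ of the function $f(x):=\sum_k a_k x^k$ in two independent pieces: the discrete advice $q'(0)=m$, and the $\cont$-name of $f$ encoded in $k\mapsto q'(k+1)$. Both pieces are to be computed from $q$; they are governed by different effective estimates that I sketch separately.

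For the advice I would pick $m$ computably from $n$ as follows. For any $m>n$, every $x\in\overline{U_m}$ satisfies
\[ |f(x)|\leq \sum_{k\in\NN} n\cdot 2^{-k/(n+1)}\cdot 2^{k/(m+1)}=\frac{n}{1-2^{-\alpha}}, \qquad \alpha:=\tfrac{1}{n+1}-\tfrac{1}{m+1}>0. \]
I then take the least $m>n$ for which the right-hand side does not exceed $m$. As $m\to\infty$ with $n$ fixed, $\alpha\to 1/(n+1)$, so $n/(1-2^{-\alpha})$ tends to the finite constant $n/(1-2^{-1/(n+1)})$; hence such $m$ exists, and since the bound $n/(1-2^{-\alpha})$ can be upper-estimated by a rational in $n$ and $m$, a bounded search finds it effectively. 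This $m$ witnesses simultaneously that $f$ extends analytically to the closure of $U_m$ and that the extension is bounded by $m$, as required by \Cref{def:analytic rep}.

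For the $\cont$-name I would use the standard effective Weierstra\ss{} approximation tailored to the coefficient bound: on a requested precision $2^{-N}$, compute a truncation index $K=K(n,N)$ making the geometric tail $\sum_{k>K} n\cdot 2^{-k/(n+1)}$ smaller than $2^{-N-1}$, then use the $\CC^\NN$-component $k\mapsto q(k+1)$ to produce rational complex $\tilde a_k$ with $\sum_{k\leq K}|a_k-\tilde a_k|<2^{-N-1}$. Because $|x|\leq 1$ on $D$, the polynomial $\sum_{k=0}^K \tilde a_k x^k$ with rational coefficients approximates $f$ to within $2^{-N}$ uniformly, which is exactly what the metric representation of $\cont$ asks for. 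The only step requiring genuine care is the effective choice of $m$, since it must secure analyticity of the extension and a numerical sup-bound by $m$ simultaneously from $n$ alone; once the above geometric estimate is in hand, everything else is a routine tail bound plus rational approximation of finitely many coefficients.
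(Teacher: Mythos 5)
Your proof is correct and follows the standard route: the paper itself gives no proof of \Cref{thm:analytic rep}, deferring to \cite{MR3377508} and \cite{Muller1995}, and what appears there is essentially your argument — compute the advice $m$ from $n=q(0)$ alone via the geometric bound $\frac{n}{1-2^{-\alpha}}\leq m$ valid on $\overline{U_m}$ for $m>n$, and obtain the $\cont$-name from an effective tail estimate together with rational approximation of finitely many coefficients. The two places where care is genuinely needed — insisting on $m>n$ so that $\alpha>0$ (for $m=n$ the series may diverge on $\overline{U_n}$), and replacing $\frac{n}{1-2^{-\alpha}}$ by a rational upper estimate that stays bounded as $m\to\infty$ so that the search for $m$ both is decidable and terminates — are exactly the ones you identify and handle correctly; the only remaining detail, converting your rational complex polynomial into an index of the fixed dense sequence of $\cont$, is routine.
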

			A proof of the following can be found in \cite{MR3377508}:
			\begin{theorem}\label{thm:differntiation}
				Differentiation is computable as mapping from $\analytic$ to $\analytic$.
			\end{theorem}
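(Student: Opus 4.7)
The idea is to factor differentiation through the germ space $\germs$ as the composition
\[ \analytic \longrightarrow \germs \longrightarrow \germs \longrightarrow \analytic, \]
where the first arrow takes the germ (power series expansion) at $0$, the middle arrow is formal termwise differentiation $(a_k)_k \mapsto ((k+1)a_{k+1})_k$, and the final arrow is the summation map, already known to be computable by \Cref{thm:analytic rep}. It thus suffices to verify that the first two arrows are computable.

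For the first arrow, given an $\analytic$-name $q$ of $f$ with $q(0)=m$, the coefficients $a_k$ of the power series of $f$ at $0$ are recovered from Cauchy's integral formula
\[ a_k = \frac{1}{2\pi i} \oint_{|\zeta|=1} \frac{f(\zeta)}{\zeta^{k+1}}\,d\zeta, \]
an expression computable from the $\cont$-name of $f|_D$ encoded by $n\mapsto q(n+1)$. Applying Cauchy's coefficient estimate on the circles of radius $r\nearrow r_m$ and using the bound $|f|\leq m$ on $U_m$ promised by the advice $m$ gives $|a_k|\leq m\cdot r_m^{-k} = m\cdot 2^{-k/(m+1)}$, so $q(0)=m$ itself serves as a $\germs$-advice for the resulting germ. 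This produces a $\germs$-name from the $\analytic$-name computably.

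For the middle arrow, the sequence $b_k:=(k+1)a_{k+1}$ is trivially computable from $(a_k)_k$ as an element of $\CC^\NN$, so the only nontrivial task is to compute a new $\germs$-advice $m' = \varphi(m)$. Starting from $|b_k| \leq (k+1)\,m\cdot 2^{-(k+1)/(m+1)}$, one wants $(k+1)\,m\cdot 2^{-(k+1)/(m+1)} \leq m'\cdot 2^{-k/(m'+1)}$ for every $k$; taking logarithms, the right-hand side exceeds the left by a term linear in $k$ with coefficient $1/(m+1) - 1/(m'+1) > 0$ whenever $m'>m$, while the obstructing term $\log_2(k+1)$ grows only logarithmically. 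Choosing $m'$ computably large enough in terms of $m$ (an elementary calculus estimate) makes the inequality hold uniformly in $k$. Composing the three arrows and appealing to \Cref{thm:analytic rep} for the last one concludes the proof.

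\emph{Main obstacle.} The only delicate point is the bookkeeping in the middle step: one must ensure the advice inequality for \emph{all} $k$, including small values where the linear gain has not yet swallowed the logarithmic correction. This is a one-off elementary estimate and carries no further computational content beyond producing the computable function $\varphi:\NN\to\NN$.
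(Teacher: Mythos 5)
Your proof is correct. There is nothing in the paper to compare it against line by line: the theorem is stated with only a pointer to the literature, so your argument fills a gap the paper leaves to a citation. The route you take --- factor differentiation as (extract the germ at $0$) followed by (termwise differentiation with a recomputed advice) followed by (summation) --- is the natural one and sits squarely inside the paper's own toolkit: the first arrow is in substance \Cref{lemma:computing germ} (Cauchy integral formula on the unit circle, computable from the $\cont$-part of the name), sharpened by the observation that the Cauchy estimate on the circle of radius $r_m = 2^{1/(m+1)}$ converts the $\analytic$-advice $m$ verbatim into a valid $\germs$-advice, since $|a_k|\leq m\cdot 2^{-k/(m+1)}$ is exactly the defining inequality; the last arrow is \Cref{thm:analytic rep}. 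The only step with content is the new advice $m'$, and your analysis is sound: for $m'>m$ the exponent gap $k\left(\tfrac1{m+1}-\tfrac1{m'+1}\right)$ grows linearly while the loss $\log_2(k+1)$ grows logarithmically, so a computable choice on the order of $m':=C(m+1)^2$ also absorbs the finitely many small $k$; the degenerate case $m=0$ (forcing $f\equiv 0$) is harmless. Two points worth making explicit in a written version: the differentiated series still has radius of convergence at least $r_m>1$ and sums to $f'$ on all of $D$, so the output genuinely is an $\analytic$-name of $f'$; and the argument is not circular, since \Cref{thm:main functions}, which relies on this theorem, is nowhere used.
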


		\subsection{Summing power series}\label{sec:sub:summing power series}
			Summing a power series is not computable on $\CC^\NN$.
			Recall that $\Advg$ was the advice function of the representation $\germs$ of germs around zero of analytic functions on the unit disk.
			The computational task corresponding to this multivalued function is to find from a sequence that is guaranteed to have radius of convergence bigger than one a constant witnessing the exponential decay of the absolute value of the coefficients (compare \cref{eq:AG} on page \pageref{eq:AG}).
			\Cref{thm:analytic rep} states that summation is computable on $\germs$.
			Therefore, the advice function $\Advg$ can not be computable.
			The following theorem classifies the difficulty of summing power series and $\Advg$ in the sense of Weihrauch reductions.

			\begin{theorem}\label{thm:main germs}
				The following are Weihrauch-equivalent:
				\begin{itemize}
					\item \textbf{$\CCN$}, that is: Closed choice on the naturals.
					\item \textbf{$\operatorname{Sum}$}, that is: The partial mapping from $\CC^\NN$ to $\cont$ defined on the sequences with radius of convergence strictly larger than one by
					\[ \Sum((a_k)_{k\in\NN})(x) := \sum_{k\in\NN} a_k x^k. \]
					I.e. summing a power series.
					\item \textbf{$\Advg$}, that is: The function from \cref{eq:AG} on page \pageref{eq:AG}. I.e. obtaining the constant from the series.
				\end{itemize}
			\end{theorem}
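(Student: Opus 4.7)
The plan is to close a cycle of three Weihrauch reductions: $\Sum\leqW\Advg$, $\Advg\leqW\CCN$, and $\CCN\leqW\Sum$. Transitivity then yields all three pairwise equivalences in the statement.

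The first reduction $\Sum\leqW\Advg$ is essentially a direct invocation of \Cref{thm:analytic rep}. The pre-processor forwards the given $\CC^\NN$-name of $(a_k)$ to $\Advg$, which returns a valid advice $n$. The post-processor then combines $n$ with the original sequence to form a $\germs$-name, and applies the computable summation map $\germs\to\analytic\hookrightarrow\cont$.

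For $\Advg\leqW\CCN$ the key observation is that the set of valid advices for a fixed sequence is closed in $\NN$: a candidate $m$ is invalid precisely when some $k$ witnesses $|a_k|>2^{-k/(m+1)}m$, a semi-decidable condition given a $\CC^\NN$-name. Enumerating the invalid $m$ computably produces an $\A(\NN)$-name of the set of valid advices, which is non-empty by the radius assumption; a single invocation of $\CCN$ then returns a valid advice.

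The main remaining step is $\CCN\leqW\Sum$, and this is the only step that requires some thought. By \Cref{lemma:count} it suffices to reduce $\Count$ instead. The idea is to encode $\supp(p)$ as a polynomial and then read off its cardinality by point-evaluation inside $D$. Given $p\in\dom(\Count)$, the pre-processor outputs the $\CC^\NN$-name of the sequence $a_k:=1$ if $p(k)>0$ and $a_k:=0$ otherwise; this sequence is eventually zero, hence has infinite radius of convergence and lies in $\dom(\Sum)$. The oracle returns $f(x)=\sum_{k\in\supp(p)}x^k$ as an element of $\cont$, and the post-processor evaluates $f(1)=|\supp(p)|$ to absolute precision less than $1/2$ (for instance at precision $2^{-2}$) and rounds to the nearest integer, thereby outputting $\Count(p)$ in finite time. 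The only thing to check here is that point-evaluation of a $\cont$-name at $1\in D$ is computable, which is standard.
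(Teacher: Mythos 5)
Your proposal is correct and follows essentially the same route as the paper: the same cycle $\CCN\leqW\Sum\leqW\Advg\leqW\CCN$, with the reduction to $\Count$ via \Cref{lemma:count}, the indicator-coefficient sequence evaluated at $1$, the combination of advice and sequence into a $\germs$-name using \Cref{thm:analytic rep}, and the enumeration of invalid advice values to produce an $\A(\NN)$-name. The only difference is your explicit remark about rounding the approximation of $f(1)$ to an integer, which is a minor point of care the paper leaves implicit.
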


			\begin{proof}
			Build a Weihrauch reduction circle:
				\begin{description}
					\item[$\CCN\leqW\Sum$:]
					\Cref{lemma:count} permits to replace $\CCN$ by $\Count$.

					The Weihrauch reduction $\Count\leqW\Sum$ can be constructed as follows:
					Let the pre-processor be a realizer of the computable mapping that assigns to $p \in \Baire$ the sequence $(a_k)_{k\in\NN}\in \CC^\NN$ defined by
					\[ a_k :=\begin{cases} 1 & \text{ if }p(k)>0 \\ 0 &\text{ if }p(k)=0\end{cases}. \]
					Note that $p\in \dom(\Count)$ means that $p$ has a finite support, and the radius of convergence of $(a_k)_{k\in\NN}$ is strictly bigger than one (it is infinite).
					Applying a realizer of $\Sum$ will result in a name of the corresponding function $f$.
					From the definition of $(a_k)_{k\in\NN}$ it is clear that $f(1) = \Count(p)$.
					Therefore, the post-processor can be chosen as the second projection composed with a realizer of the evaluation in $1$, which is well known to be computable on the continuous functions.
				\item[$\Sum\leqW\Advg$:]
					Let the pre-processor be the identity.
					Note that an element of $\Advg((a_k)_{k\in\NN})$ and a $\CC^\NN$-name of $(a_k)_{k\in\NN}$ can easily be put together to an $\germs$-name of $(a_k)_{k\in\NN}$.
					Thus the post-processor can be chosen to be the composition of this mapping and a realizer of the summation mapping on $\OO$, which is computable by \Cref{thm:analytic rep}.
				\item[$\Advg\leqW\CCN$:]
					Let the pre-processor be the function that maps a given name $p$ of $(a_k)_{k\in\NN}\in\CC^{\omega}$ to an $\A(\NN)$-name of the set $\Advg((a_k)_{k\in\NN})$.
					Note that an enumeration of the complement of this set can be extracted from $p$ as follows:
					For all $k$ and $m\in\NN$ dovetail the test $\abs{a_k}>2^{-\frac k{m+1}}m$.
					If it holds for some $k$, return $m$ as an element of the complement.
					This procedure indeed produces a list of the complement of $\Advg((a_k)_{k\in\NN})$:
					If the above does not hold for any $k$, then $\abs{a_k}\leq 2^{-\frac k{m+1}}m$ for all $k$ and $m$ is an element of $\Advg((a_k)_{k\in\NN})$.

					Applying closed choice to this set will give result in a valid return value.
					Thus, choose the post-processor to be the second projection.
				\end{description}
			\end{proof}

		\subsection{Differentiating analytic functions}

			In \Cref{sec:analytic functions} we remarked that it is not possible to compute the germ of an analytic function just from a name as continuous function.
			The proof in \cite{Muller1995} that this is in general impossible, however, argues about analytic functions on an interval.
			The first lemma of this chapter proves that for analytic functions on the unit disk it is possible to compute a germ if its base point is well inside of the domain.
			We only consider the case where the base point is zero, but the proof works whenever a lower bound on the distance of the base point to the boundary of the disk is known.

			\begin{lemma}\label{lemma:computing germ}
				$\operatorname{Germ}$, that is: The partial mapping from $\cont$ to $\CC^\NN$ defined on analytic functions by mapping them to their series expansion around zero, is computable.
			\end{lemma}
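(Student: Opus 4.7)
The plan is to use the Cauchy coefficient formula. Since an analytic function on the closed unit disk $D$ extends analytically to an open superset of $D$, there is some $\rho>1$ for which $f$ is analytic on the open disk of radius $\rho$. In particular, for any fixed $r\in(0,1)$ (e.g.\ $r=\tfrac12$) the formula
\[
a_k \;=\; \frac{1}{2\pi r^k}\int_0^{2\pi} f\bigl(re^{i\theta}\bigr)\,e^{-ik\theta}\,d\theta
\]
holds for every coefficient of the Taylor expansion of $f$ around $0$. This expresses $a_k$ as an integral whose integrand depends uniformly on $f$ and $k$.

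The next step is to turn the formula into a procedure. From a $\cont$-name of $f$ one obtains a sequence of rational polynomials converging uniformly on $D$ to $f$ at a known rate; this in particular allows one to evaluate $f$ at any point of $D$ to any desired precision with a computable modulus of continuity. Hence the integrand $\theta\mapsto f(re^{i\theta})\,e^{-ik\theta}$ is a computable element of $\mathcal C([0,2\pi],\CC)$, and integration of continuous functions on a compact interval is a standard computable operation. Dividing the resulting integral by $2\pi r^k$ yields a name of $a_k$ in $\CC$, uniformly in $k$ and in the name of $f$.

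Carrying out this computation for every $k$ (dovetailed in parallel) produces a $\CC^\NN$-name of the sequence $(a_k)_{k\in\NN}$, which is precisely what a realizer of $\operatorname{Germ}$ must output. I do not expect a real obstacle: the only delicate point is that $r$ must be chosen strictly less than $1$ so that the circle of integration lies well inside $D$, where continuity of $f$ and the mere existence (not computability) of an analytic continuation to a slightly larger disk are enough to justify the formula. Trying to push $r$ up to $1$ would require extra information on how far $f$ extends — precisely the discrete advice that distinguishes $\analytic$ from $\cont|_{\analytic}$ — but for the germ at $0$ the interior estimate suffices.
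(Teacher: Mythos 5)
Your proof is correct and is essentially the paper's own argument: both recover the Taylor coefficients via the Cauchy integral formula together with the computability of integration of continuous functions on a compact interval, uniformly in $k$. The only difference is that the paper integrates directly over the unit circle $\abs{z}=1$ — which is legitimate without any extra advice, since $f$ is continuous on all of the closed disk $D$ and, being analytic on $D$, extends to an open superset, so your caveat about having to keep $r<1$ is unnecessary — whereas you retreat to $r=\tfrac12$; both choices work.
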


			\begin{proof}
				Remember that the Cauchy integral formula states that for an analytic function $f$ on the closed unit disk:
				\[ f^{(k)}(0) = \frac{k!}{2\pi i} \int_0^{2\pi} f\left(e^{i t}\right)e^{-i t(k+1)} dt \]
				It is well known that the integral is computable from a name of the function $f\in \cont$.
				This works uniformly in $k$.
			\end{proof}

			The next theorem is very similar to \Cref{thm:main germs}.
			Both the advice function $\Advc$ and computing a germ around a boundary point are shown to be Weihrauch equivalent to $\CCN$.
			Note that the coefficients of the series expansion $(a_k)_{k\in\NN}$ of an analytic function $f$ around a point $x_0$ are related to the derivatives $f^{(k)}$ of the function via $k!a_k = f^{(k)}(x_0)$.
			Therefore, computing a series expansion around a point is equivalent to computing all the derivatives in that point.

			\begin{theorem}\label{thm:main functions}
				The following are Weihrauch equivalent:
				\begin{itemize}
					\item \textbf{$\CCN$}, that is closed choice on the naturals.
					\item \textbf{$\Diff_1$}, that is the partial mapping from $\cont$ to $\CC$ defined on analytic functions by
					\[ \Diff_1(f) := f'(1). \]
					I.e. evaluating the derivative of an analytic function in $1$.
					\item \textbf{$\Advc$}, that is the function from \cref{eq:AC}. I.e. obtaining the constant from the function.
				\end{itemize}
			\end{theorem}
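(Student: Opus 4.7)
I would mirror the proof of \Cref{thm:main germs} and build the Weihrauch reduction cycle $\CCN \leqW \Diff_1 \leqW \Advc \leqW \CCN$.

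For $\CCN \leqW \Diff_1$ I reduce via $\Count$ (\Cref{lemma:count}): the pre-processor sends $p \in \dom(\Count)$ to a $\cont$-name of the polynomial
\[ f_p(z) := \sum_{k \in \supp(p)} \frac{z^{k+1}}{k+1}, \]
which is entire, hence lies in $\dom(\Diff_1)$. Since $f_p'(1) = \card{\supp(p)} = \Count(p)$, a realizer of $\Diff_1$ returns a $\CC$-name of the natural number $\Count(p)$, from which the post-processor recovers $\Count(p)$ by approximating with precision below $\tfrac 12$ and rounding. For $\Diff_1 \leqW \Advc$ the pre-processor is the identity: given the advice $m \in \Advc(f)$, prepending it to the $\cont$-name of $f$ yields an $\analytic$-name, applying the computable differentiation map of \Cref{thm:differntiation} produces an $\analytic$-name of $f'$, and evaluation at $1$ delivers $f'(1)$.

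The main step is $\Advc \leqW \CCN$. Unlike the analogous step for $\Advg$, the condition defining $\Advc(f)$ --- that $f$ extends analytically to the open disk $U_m \supsetneq D$ with bound $m$ --- is not obviously semi-decidable from a $\cont$-name of $f$, since the bound must hold on a disk strictly larger than $D$. To get around this I would introduce the auxiliary set
\[ \Advc'(f) := \bigl\{n \in \NN \mid \abs{a_k} \leq n/r_n^k \text{ for all } k \in \NN\bigr\}, \]
where $(a_k)_{k\in\NN}$ is the Taylor expansion of $f$ around $0$, computable from a $\cont$-name by \Cref{lemma:computing germ}. The complement of $\Advc'(f)$ is $\Sigma^0_1$ uniformly in $f$ by dovetailing the tests $\abs{a_k} > n/r_n^k$, and by Cauchy's estimates $\Advc(f) \subseteq \Advc'(f)$, so $\Advc'(f)$ is non-empty whenever $f \in \dom(\Advc)$. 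For $n \in \Advc'(f)$ and $n' > n$, a geometric series bound gives $\sup_{z \in U_{n'}} \abs{f(z)} \leq n/(1 - r_{n'}/r_n)$, which tends to the finite value $n r_n/(r_n - 1)$ as $n' \to \infty$; hence some $n' = n'(n)$ computable from $n$ always lies in $\Advc(f)$. The pre-processor then produces an $\A(\NN)$-name of $\Advc'(f)$, the $\CCN$-oracle returns some $n \in \Advc'(f)$, and the post-processor outputs $n'(n) \in \Advc(f)$. The main obstacle is precisely this detour via $\Advc'$, which is needed because the function-level bound in $\Advc$ translates into a coefficient-level bound only after multiplying by the finite factor $r_n/(r_n - 1)$.
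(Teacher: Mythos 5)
The reductions $\Diff_1 \leqW \Advc$ and $\Advc \leqW \CCN$ are correct. The former coincides with the paper's argument. The latter is essentially the paper's route in unfolded form: your $\Advc'(f)$ is exactly $\Advg(\Germ(f))$, so you are combining the computability of $\Germ$ (\Cref{lemma:computing germ}) with the $\Advg \leqW \CCN$ argument from \Cref{thm:main germs}, and your geometric-series conversion of a coefficient-level bound into a function-level bound is precisely what the summation operator of \Cref{thm:analytic rep} supplies in the paper's one-line proof of this step.

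The genuine gap is in $\CCN \leqW \Diff_1$: the assignment $p \mapsto f_p = \sum_{k\in\supp(p)} z^{k+1}/(k+1)$ is not computable --- indeed not even continuous --- from $\dom(\Count)$ into $\cont$, so it cannot serve as a pre-processor. A realizer must commit to the first entry of a $\cont$-name of $f_p$, a rational polynomial within supremum distance $2^0=1$ of $f_p$ on $D$, after reading only a finite prefix $p(0),\dots,p(m)$. But $p$ can then be extended, still with finite support, so that $\supp(p)\supseteq\{m+1,\dots,M\}$ and hence $f_p(1)\geq\sum_{k=m+1}^{M}\tfrac1{k+1}$ is arbitrarily large; no single polynomial is within $1$ of all of these. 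The constraint you are missing is that the summands must decay \emph{summably} in supremum norm on $D$ while each still contributing exactly $1$ to the derivative at the boundary point $1$; your summands have norm $\tfrac1{k+1}$, which is not summable. This is exactly why the paper takes $f_n(x)=(x-x_n)^{-2^{n+1}}$ with $f_n'(1)=1$ and $\abs{f_n(x)}<2^{-n}$ on $D$. Your idea can be repaired while staying with polynomials, e.g. $g_k(z):=2^{-k}z^{2^k}$ satisfies $g_k'(1)=1$ and has supremum norm $2^{-k}$ on $D$; note that Bernstein's inequality forces the degree of any such polynomial to grow like $2^k$, so the failure of your low-degree choice is not an accident of bookkeeping.
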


			\begin{proof}
				By building a circle of Weihrauch reductions:
				\begin{description}
					\item[$\CCN\leqW\Diff_1$:]
						Use \Cref{lemma:count} and show $\Count\leqW\Diff_1$ instead.
						Let $p$ be an element of the do- \quad

						\noindent\begin{minipage}{.6\textwidth}
							main of $\Count$.
							That is: $\supp(p)=\{n\in\NN\mid p(n)>0\}$ is finite.
							Define a sequence of analytic functions $f_n:D\to \CC$ by
							\[ f_n(x) :=\left(x-x_n\right)^{-2^{n+1}},\]
							where
							\[ x_n:=1+2^{\frac{n+1}{2^{n+1}+1}} \]
							(see \Cref{fig:the functions fn}). The sequence is carefully chosen such that
						\end{minipage}\hspace{.25cm}
						\begin{minipage}{.3\textwidth}
							\vspace{-.1cm}
							\begin{tikzpicture}
								\begin{scope}
									\clip (0,0) rectangle (5,2);
									\draw[color=black,thick] plot[samples=100,domain=0:1.5] ({2.5*\x},{(\x-1-2^(1/3))^(-2)*2.5});
									\draw[color=blue,] plot[samples=100,domain=0:1.5] ({2.5*\x},{(\x-1-2^(2/5))^(-4)*2.5});
									\draw[color=green] plot[samples=100,domain=0:1.5] ({2.5*\x},{(\x-1-2^(3/9))^(-8)*2.5});
									\draw[color=yellow] plot[samples=100,domain=0:1.2] ({2.5*\x},{(\x-1-2^(4/17))^(-16)*2.5});
								\end{scope}
								\draw[->] (-.2,0) -- (3.5,0);
								\draw[->] (0,-.2) -- (0,2) node[above] {};
								\node at (0,-.4) {$0$};
								\draw (2.5,.1) -- (2.5,-.1) node[below] {$1$};
								\node at (3.75,1.75) {\textcolor{black}{$\pmb{f_0}$}};
								\node at (3.75,1.25) {\textcolor{blue}{$f_1$}};
								\node at (3.75,.75) {\textcolor{green}{$f_2$}};
								\node at (3.75,.25) {\textcolor{yellow}{$f_3$}};
							\end{tikzpicture}
							\captionof{figure}{The functions $f_n$.}\label{fig:the functions fn}
						\end{minipage}
						\[ f_n'(1) = 1\quad\text{and}\quad \forall x\in D: \abs{f_n(x)}<2^{-n}. \]
						Furthermore, it is computable as a sequence of functions from $\cont$.

						Consider the function
						\[ f(x) := \sum_{n\in\supp(p)} f_n(x). \]
						Note that this function can be computed from $p$: To approximate $f$ by a polynomial it suffices to approximate those $f_n$ whose index is small.
						Let the pre-processor be a realizer of this assignment.

						Note that applying $\Diff_1$ to the function $f$ results in
						\[ \Diff_1(f) = f'(1) = \sum_{n\in\supp(p)} f'_n(1) =\card{\supp(p)}. \]
						Therefore, the post-processor $K(p,q):=q$ results in a Weihrauch reduction.
					\item[$\Diff_1\leqW\Advc$:]
						Let the pre-processor be the identity.

						An appropriate post-processor can be described as follows: Combine the return value of $\Advc$ on the function $f$ and the $\cont$-name of $f$ to an $\analytic$-name of $f$.
						Afterwards apply a realizer of the differentiation operator on $\analytic$ which can be chosen computable by \Cref{thm:differntiation}.
						Finally, obtain a $\cont$-name of $f'$ from the $\analytic$-name and evaluate it in $1$.
					\item[$\Advc\leqW\CCN$:] Combine computable $\operatorname{Germ} : \mathcal{C}(D) \to \mathbb{C}^\mathbb{N}$, the classification $\Advg \equivW \CCN$ from Theorem \ref{thm:main germs} and computable $(a_k)_{k\in\NN} \mapsto \left(x\mapsto \sum_k a_k x^k \right ) :  \OO\to \analytic$ from Theorem \ref{thm:analytic rep}.
				\end{description}
			\end{proof}

			Recall from the introduction that $\cont|_{\analytic}$ resp.\ $\CC^\NN|_{\germs}$ denote the represented spaces obtained by restricting the representation of $\cont$ resp.\ $\CC^\NN$ to $\analytic$, resp.\ $\germs$.
			\Cref{thm:analytic rep,thm:main germs,thm:main functions} and \Cref{lemma:computing germ} are illustrated in \cref{figure:reductions}.
			\begin{figure}
				\centering
				\begin{tikzpicture}
					\node at (-3,-1) {$\left.\cont\right|_{\mathcal{C}^\omega}$};
					\draw[->,color = blue] (-3.25,-.75) -- (-3.25,.5) node[left] {\textcolor{blue}{$\Germ$}} -- (-3.25,1.75);
					\draw[->,color = red,dashed] (-2.5,-.8) -- (-.3,-.1);
					\node at (-.65,-.6) {\textcolor{red}{$\Advc$}};
					\draw[->,color = red,dashed] (-2.5,-1.2) -- (-.3,-2.4);
					\node at (-1,-1.6) {\textcolor{red}{$\Diff_1$}};
					\draw[->,color = red,dashed] (-2.25,-1.1) -- (0,-1.1) node[below] {\textcolor{purple}{$\id$}} -- (2.25,-1.1);
					\draw[->,color = blue] (2.25,-.9) -- (-2.25,-.9);
					\node at (3,-1) {$\analytic$};
					\draw[->,color = blue] (2.5,-.8) -- (.3,-.1);
					\node at (.95,-.6) {\textcolor{blue}{$\Advc$}};
					\draw[->,color = blue] (3.25,-.75) -- (3.25,0.5) node[right] {\textcolor{blue}{$\Germ$}} -- (3.25,1.75);
					\draw[->,color = blue] (2.5,-1.2) -- (.3,-2.4);
					\node at (1,-1.6) {\textcolor{blue}{$\Diff_1$}};
					\node at (0,0) {$\NN$};
					\node at (0,-2.5) {$\CC$};
					\node at (-3,2) {$\left.\CC^\NN\right|_{\germs}$};
					\draw[->, color=red,dashed] (-2.75,1.75) -- (-2.75,0.5) node[right] {\textcolor{red}{$\Sum$}} -- (-2.75,-.75);
					\draw[->,color = blue] (2.5,1.8) -- (.3,1.1);
					\node at (.95,1.6) {\textcolor{blue}{$\Advg$}};
					\draw[->,color = red,dashed] (-2.25,2.1) -- (0,2.1) node[above] {\textcolor{purple}{$\id$}} -- (2.25,2.1);
					\draw[->,color = blue] (2.25,1.9) -- (-2.25,1.9);
					\node at (3,2) {$\germs$};
					\draw[->,color = red,dashed] (-2.5,1.8) -- (-.3,1.1);
					\node at (-.65,1.6) {\textcolor{red}{$\Advg$}};
					\draw[->, color=blue] (2.75,1.75) -- (2.75,0.5) node[left] {\textcolor{blue}{$\Sum$}} -- (2.75,-.75);
					\node at (0,1) {$\NN$};
					\draw[->,color = blue] (0,.8) -- (0,.2);

					\node at (6,1.25) {\textcolor{red}{\dashuline{dash}: $\equivW \CCN$}};
					\node at (6,-.25) {\textcolor{blue}{\underline{line}: $\equivW \id_{\Baire}$}};
				\end{tikzpicture}
				\caption{The results of \Cref{thm:analytic rep,thm:main germs,thm:main functions} and \Cref{lemma:computing germ}.}\label{figure:reductions}
			\end{figure}

	\section{Polynomials}\label{sec:polynomials}
		\subsection{Polynomials as finite sequences}\label{sec:sub:polynomials as finite sequences}

			Consider the set $\CC[X]$ of polynomials with complex coefficients in one variable $X$.
			There are several straightforward ways to represent polynomials.
			The first one that comes to mind is to represent a polynomial by a finite list of complex numbers.
			One can either demand the length of the list to equal the degree of the polynomial or just to be big enough to contain all of the non-zero coefficients.
			Since the first option fails to make basic operations like addition of polynomials computable, we choose the second option.

			\begin{definition}
				Let $\CC[X]$ denote the \demph{represented space of polynomials}, where the representation is defined as follows: $p\in\Baire$ is a ${\CC[X]}$-name of $P$ if $p(0)\geq \deg(P)$ and $n\mapsto p(n+1)$ is a $\CC^{p(0)+1}$-name of the first $p(0)+1$ coefficients of $P$.
			\end{definition}

			Let $\CC_m[X]$ denote the set of monic polynomials over $\CC$, i.e. the polynomials with leading coefficient equal to one.
			Make $\CC_m[X]$ a represented space by restricting the representation of $\CC[X]$.
			Monic polynomials are important because it is possible to compute their roots -- albeit in an unordered way.
			To formalize this define a representation of the disjoint union $\mathbb{C}^\times := \coprod_{n \in \mathbb{N}} \CC^n$ as follows:
			A function $p$ is a name of $x\in \mathbb{C}^\times$ if and only if $x\in \CC^{p(0)}$ and $n\mapsto p(n+1)$ is a $\CC^{p(0)}$ name of $x$.
			Note that the construction of the representation of $\CC[X]$ is very similar.
			The only difference being that in $\mathbb{C}^\times$ vectors with leading zeros are not identified with shorter vectors.

			Now, the task of finding the zeros in an unordered way can be formalized by computing the multivalued function that maps a polynomial to the set of lists of its zeros, each appearing according to its multiplicities:
			\[ \Zeros: \subseteq \CC[X] \mto \CC^\times, \quad P \mapsto \left\{(a_1,\ldots,a_{\deg(P)})\mid \exists \lambda:P=\lambda\prod_{k=1}^{\deg(P)}(X-a_k)\right\} \]
			The importance of $\CC_m[X]$ is reflected in the following well known lemma:

			\begin{lemma}\label{resu:computable operations on the polynomials}
				Restricted to $\CC_m[X]$ the mapping $\Zeros$ is computable.
			\end{lemma}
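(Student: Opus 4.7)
The plan is to reduce the claim to the classical computable fundamental theorem of algebra by first recovering the exact degree of the monic polynomial from its name and then invoking a standard root-finding procedure.

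Given a $\CC_m[X]$-name $p$ of $P$ with $p(0) = n$, let $a_0, \ldots, a_n$ denote the complex coefficients encoded by $p$. Since $P$ is monic of some degree $d \leq n$, one has $a_d = 1$ and $a_k = 0$ for all $k > d$. In particular, every coefficient $a_k$ with $k \geq d$ lies in the two-point set $\{0, 1\} \subseteq \CC$. Equality on such a finite discrete subset of $\CC$ is decidable from rational approximations: for an $a_k$ known to lie in $\{0, 1\}$, dovetailing the semidecidable tests $|a_k| > \tfrac{1}{3}$ and $|a_k - 1| > \tfrac{1}{3}$ yields exactly one success, which identifies the value of $a_k$. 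Starting at $k = n$ and descending, we find the largest $k$ with $a_k = 1$; this value is $d$. Crucially, throughout the search we only query coefficients of index $\geq d$, so the $\{0,1\}$-dichotomy is always available.

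Once $d$ is known, the task reduces to computing an unordered list of the $d$ zeros of a monic polynomial whose degree is explicitly provided, which is the standard computable version of the fundamental theorem of algebra. Several uniform algorithms are available, for instance Weyl's quadtree bisection using the argument principle or Sch\"onhage's splitting circle method. The analytic ingredient that makes any of them work is the continuity of the multiset of roots due to Ostrowski: if a monic degree-$d$ polynomial $\tilde{P}$ approximates $P$ sufficiently well in sup-norm on Cauchy's bounding disk of radius $1 + \max_k |a_k|$, then there is a bijection between their zero-multisets pairing points within a prescribed distance, with a quantitative modulus that is computable from $d$ and the coefficient bound. Hence rational approximations of the coefficients of $P$ suffice to approximate the roots to any desired precision, and packaging the resulting $d$-tuple as a $\CC^d$-name produces a valid name of an element of $\Zeros(P) \subseteq \CC^\times$.

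The main potential obstacle is the multivaluedness arising from coincident roots, where arbitrarily small perturbations of the coefficients can cause non-Lipschitz perturbations of the individual roots. This is exactly what Ostrowski's continuity theorem addresses, so it is not a genuine obstruction; the degree computation above is purely discrete bookkeeping that exploits the discreteness of $\{0,1\}$.
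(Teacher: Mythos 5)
Your proposal is correct and follows essentially the same route as the paper's (sketched) proof: both recover the degree by approximating the coefficients coarsely enough to distinguish the leading $1$ from the trailing $0$s among the indices $\geq \deg(P)$, and then delegate the actual root-finding for a monic polynomial of known degree to the standard computable fundamental theorem of algebra (the paper cites the literature for this step, just as you invoke Weyl/Sch\"onhage together with Ostrowski's continuity of the root multiset). Your dovetailed tests $|a_k|>\tfrac13$ versus $|a_k-1|>\tfrac13$ are a slightly more careful rendering of the paper's ``approximate each coefficient with precision $\tfrac12$'' and add nothing essentially new.
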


			\begin{proof}[Proof sketch]
				This is well known.
				A nice description of an algorithm to do this can for instance be found in \cite{MR1905263}, although algorithms were known a lot longer.
				We only sketch how to find out the degree, which is the number of zeros of the polynomial and therefore the first step towards computing the set of zeros as element of $\CC^\times$.
				Get an approximation to each of the coefficients with precision $\frac 12$.
				Since the highest coefficient will be one, it can be found from this approximation.
			\end{proof}

			The main difficulty in computing the zeros of an arbitrary polynomial is to find its degree.
			A polynomial of known degree can be converted to a monic polynomial with the same zeros by scaling.
			On $\CC[X]$ consider the following functions:
			\begin{itemize}
				\item $\deg$: The function assigning to a polynomial its degree.
				\item $\dbnd$: The multivalued function where an integer is a valid return value if and only if it is an upper bound of the degree of the polynomial.
			\end{itemize}
			$\dbnd$ is computable by definition of the representation of $\CC[X]$.
			$\deg$ is not computable on the polynomials, however, from the proof of Lemma \ref{resu:computable operations on the polynomials} it follows:
			\begin{lemma}\label{resu:deg on the monics}
				The degree mapping is computable when restricted to the monic polynomials.
			\end{lemma}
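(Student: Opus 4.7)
The plan is to use the representation of $\CC_m[X]$ directly. A $\CC_m[X]$-name of a monic polynomial $P$ provides an upper bound $N := p(0) \geq \deg(P)$ together with a $\CC^{N+1}$-name of $(a_0, \ldots, a_N)$, the first $N+1$ coefficients of $P$. Writing $k := \deg(P)$, monicness forces $a_k = 1$, $a_j = 0$ for all $k < j \leq N$, and leaves the coefficients $a_0, \ldots, a_{k-1}$ arbitrary in $\CC$. The task is thus to recover $k$ from this data.

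The key observation is that for every index $j$ with $k \leq j \leq N$, the coefficient $a_j$ satisfies $|a_j| \in \{0, 1\}$, in particular $|a_j| \neq \tfrac{1}{2}$. I would therefore compute in parallel, for every $j \leq N$, a sequence of $2^{-\ell}$-approximations $\tilde a_j^{(\ell)}$ of $a_j$, and search for the least $\ell$ at which some certificate $|\tilde a_j^{(\ell)}| + 2^{-\ell} < \tfrac12$ (establishing $|a_j| < \tfrac12$) or $|\tilde a_j^{(\ell)}| - 2^{-\ell} > \tfrac12$ (establishing $|a_j| > \tfrac12$) is obtained. Scanning $j = N, N-1, \ldots$ in decreasing order, return the first $j$ for which the second alternative fires, skipping those for which the first fires.

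Termination and correctness follow from the preceding observation: for each $j > k$ we have $a_j = 0$, so the first alternative eventually fires; for $j = k$ we have $|a_j| = 1$, so the second alternative eventually fires. Hence the procedure advances past every $j > k$ and outputs $j = k = \deg(P)$. The only potential obstacle is the case $j < k$, where $|a_j|$ could in principle equal $\tfrac12$ and the semi-decision procedure would not terminate; but the scan never reaches those indices, since it halts at $j = k$ first. This gives a computable realizer of $\deg\colon \CC_m[X] \to \NN$.
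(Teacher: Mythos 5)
Your proof is correct and rests on the same key observation as the paper's (which appears inside the proof sketch of the lemma on computing $\Zeros$ for monic polynomials): every coefficient $a_j$ with $j \geq \deg(P)$ lies in $\{0,1\}$, so it is bounded away from $\tfrac12$ and the degree is the largest index whose coefficient is certified close to $1$. The paper gets away with a single approximation of each coefficient at the fixed precision $\tfrac12$ rather than your adaptive refinement, but this is only a minor implementation difference, and your handling of the indices $j<\deg(P)$ (never reached by the scan) is the right way to dispose of the one genuine subtlety.
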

			The next result classifies finding the degree, turning a polynomial into a monic polynomial and finding the zeros to be Weihrauch equivalent to $\lpo^*$.

			\begin{proposition}
				The following are Weihrauch-equivalent to $\lpo^*$:
				\begin{itemize}
					\item $\deg$, that is the mapping from $\CC[X]$ to $\NN$ defined in the obvious way.
					\item $\Monic$, that is the mapping from $\CC[X]$ to $\CC_m[X]$ defined on the non-zero polynomials by
					\[ P = \sum_{k=0}^{\deg(P)} a_k X^k \mapsto \sum_{k=0}^{\deg(P)}\frac{a_k}{a_{\deg(P)}}X^k. \]
					\item $\Zeros : \subseteq \CC[X] \to \mathbb{C}^\times$, mapping a non-zero polynomial to the set of its zeros, each appearing according to its multiplicity.
				\end{itemize}
			\end{proposition}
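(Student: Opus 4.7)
The plan is to close a cycle of Weihrauch reductions
\[ \lpo^* \leqW \deg \leqW \Monic \leqW \Zeros \leqW \lpo^*, \]
whose three short legs only combine lemmas already at hand, while the fourth arrow $\lpo^* \leqW \deg$ is the one that actually requires an idea.

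For $\deg \leqW \Monic$ the post-processor applies the computable degree-on-monics from \Cref{resu:deg on the monics}. For $\Monic \leqW \Zeros$, a $\CC^\times$-name of $\Zeros(P)$ presents the degree $d$ together with a tuple $(a_1,\dots,a_d)$ of roots with multiplicity, from which expanding $\prod_{k=1}^d (X-a_k)$ yields $\Monic(P)$ by standard polynomial arithmetic. For $\Zeros \leqW \lpo^*$, one first uses $\lpo^*$ to dispatch parallel equality-to-zero tests on the coefficients (in particular extracting $\deg(P)$), then divides by the nonzero leading coefficient to obtain $\Monic(P)$, and applies the computable zero-finder from \Cref{resu:computable operations on the polynomials}. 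The companion easy reduction $\deg \leqW \lpo^*$ is just the first step of this: read the bound $N=p(0)$ from the $\CC[X]$-name, send $N+1$ parallel equality-to-zero tests on $a_0,\dots,a_N$ to $\lpo^*$, and output the largest index with a ``nonzero'' verdict.

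The main obstacle is $\lpo^* \leqW \deg$, since $\deg$ produces a single natural number while an instance of $\lpo^*$ must be answered by $n$ bits. My idea is to pack those $n$ bits into the binary expansion of a single natural number by using a polynomial of exponentially large formal degree. Given input $(c_1,\dots,c_n) \in \CC^n$ to $\lpo^*$ in the complex-equality-test representative, the pre-processor produces a $\CC[X]$-name with formal bound $2^n-1$ of
\[ P(X) := \prod_{k=1}^n \bigl(1 + c_k\, X^{2^{k-1}}\bigr). \]
Expanding, the coefficient of $X^j$ is $\prod_{k\in S_j} c_k$, where $S_j\subseteq\{1,\dots,n\}$ is the set of positions at which bit $k-1$ of $j$ is set, and this coefficient is nonzero iff every $c_k$ with $k\in S_j$ is nonzero. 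The largest such $j$ corresponds to $S_j=\{k:\, c_k\neq 0\}$, giving $\deg(P)=\sum_{k:\, c_k\neq 0} 2^{k-1}$, and the post-processor simply inspects bit $k-1$ of the returned degree to recover each $\lpo(c_k)$. Beyond locating this exponential-size binary encoding --- which is really the only subtle point --- I do not foresee further technical hurdles.
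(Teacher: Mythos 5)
Your proposal is correct, and the three easy legs ($\deg\leqW\Monic$ via \Cref{resu:deg on the monics}, $\Monic\leqW\Zeros$ by expanding $\prod_k(X-y_k)$, and $\Zeros\leqW\lpo^*$ via degree-finding plus \Cref{resu:computable operations on the polynomials}) coincide with what the paper does; but for the critical direction $\lpo^*\leqW\deg$ you take a genuinely different route. The paper works throughout with the single declared representative $\min:\Baire\to\NN$ of $\lpo^*$ (\Cref{resu:lpo*}): for $\min\leqW\deg$ it sets $a_i:=2^{-\min\{j\mid p(0)-p(j)=i\}}$ (and $a_i:=0$ if no such $j$ exists), forms $P:=\sum_{i=0}^{p(0)}a_iX^i$, and reads off $\min p=p(0)-\deg P$; the nonzero coefficients thus mark which values $p(0)-i$ occur in the range of $p$, and only the \emph{position} of the top nonzero coefficient is used. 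You instead switch to the equality-test representative of $\lpo^*$ and pack all $n$ answer bits into the binary expansion of $\deg P$ via $P=\prod_{k=1}^n(1+c_kX^{2^{k-1}})$; the combinatorics (coefficient of $X^j$ equals $\prod_{k\in S_j}c_k$, maximal nonzero index has $S_j=\{k\mid c_k\neq 0\}$) checks out, and the formal degree bound $2^n-1$ is computable from the input, so the pre-processor is fine. What each approach buys: the paper's argument is self-contained relative to \Cref{resu:lpo*} and needs six reductions (both directions of three equivalences), whereas your cycle needs only four and extracts strictly more information from a single call to $\deg$ (all $n$ bits rather than one number $p(0)-\min p$). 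The one thing you should make explicit is the equivalence $\lpo^*\equivW(=_{\CC})^*$, which the paper asserts only informally ("a fixed finite but arbitrary high number of equality tests") and deliberately does not set up as machinery; it is standard and citable (e.g.\ from the source of \Cref{resu:lpo*}), so this is a presentational gap, not a mathematical one. Your companion reduction $\deg\leqW\lpo^*$ by testing $a_0,\dots,a_{p(0)}$ against zero in parallel is also correct and arguably more transparent than the paper's construction of the auxiliary sequence fed to $\min$.
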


			\begin{proof}
				Note that  $\lpo^*$ is Weihrauch equivalent to the function $\min:\Baire \to \NN$ by \Cref{resu:lpo*}.
				Proceed by building a chain of Weihrauch equivalences:
				\begin{description}
					\item[$\min\equivW\deg$:]
					To show\footnote{This direction of the proof was simplified based on a suggestion by an anonymous referee.} $\min\leqW\deg$, note that given $p \in \Baire$ and $n \leq p(0)$, we can compute $a_n$ defined by $a_n = 2^{-\min \{i \mid p(0) - p(i) = n\}}$, where we understand $a_n = 0$ if $\{i \mid p(0) - p(i) = n\}  =\emptyset$. Subsequently we can compute the polynomial $P := \sum_{i=0}^{p(0)} a_ix^i$, and find that $\min p = p(0) - \deg P$.

          On the other hand to see $\deg\leqW\min$ let $p$ be a $\CC[X]$-name of a polynomial $P$.
          Set $H(p)(0):= p(0)$ and let $H(p)(n)$ be the minimal number such that the $2^{-n+1}$-approximation of the polynomial is consistent with $\deg(P)=p(0)-m$.
          Apply $\min$ to this function to get $p(0)-\deg(P)$.
          Thus, the post-processor $K$ can be chosen as $K(p,q) = p(0)-q(0)$
					\item[$\deg\equivW\Monic$:]
					For $\deg\leqW\Monic$ let the pre-processor be the identity.
					Applying $\Monic$ to the input will result in a monic polynomial of the same degree.
					Let the post-processor be the second projection composed with a realizer of the degree mapping on the monic polynomials that can be chosen computable by \Cref{resu:deg on the monics}.

					$\Monic\leqW\deg$ is obvious since division by a non-zero number is computable.
					\item[$\Monic\equivW\Zeros$:]
					To see that $\Monic\leqW\Zeros$ note, that from approximations to a vector $(y_0, y_1, \ldots, y_n)$ of all the zeros of a polynomial $P$ approximations to the coefficients of $\Monic(P)$ can be computed via
					\[ \Monic(P) = \prod_{k \leq n} (X - y_k). \]

					For the opposite direction note, that $P$ and $\Monic(P)$ have the same set of zeros and that the set of zeros can be computed from $\Monic(P)$ by \Cref{resu:computable operations on the polynomials}.
				\end{description}
			\end{proof}

		\subsection{Polynomials as functions}\label{sec:sub:polynomials as functions}

			As polynomials induce analytic functions on the unit disk, the representations of $\analytic$ and $\cont$ can be restricted to the polynomials.
			The represented spaces that result from this are $\analytic|_{\CC[X]}$, resp.\ $\cont|_{\CC[X]}$.
			Here, the choice of the unit disk $D$ as domain seems arbitrary: A polynomial defines a continuous resp.\ analytic function on the whole space.
			The following proposition can easily be checked to hold whenever the domain contains an open neighborhood of zero and, since translations are computable with respect to all the representations we consider, if it contains any open set.

			Denote the versions of the degree resp.\ degree bound functions that take continuous resp.\ analytic functions by $\deg_{\cont}$, $\dbnd_{\cont}$ resp. $\deg_{\analytic}$, $\dbnd_{\analytic}$.
			When polynomials are regarded as functions, resp.\ analytic functions, these maps become harder to compute.

			\begin{theorem}\label{resu:polynomials as analytic functions}
				The following are Weihrauch-equivalent:
				\begin{itemize}
					\item $\CCN$, that is: Closed choice on the naturals.
					\item $\dbnd_{\analytic}$, that is: Given an analytic function which is a polynomial, find an upper bound of its degree.
					\item $\deg_{\analytic}$, that is: Given an analytic function which is a polynomial, find its degree.
				\end{itemize}
			\end{theorem}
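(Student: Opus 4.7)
I will establish the cycle
$\CCN \leqW \dbnd_{\analytic} \leqW \deg_{\analytic} \leqW \CCN$.
The middle reduction is trivial since the exact degree is itself an upper bound on the degree, so a realizer of $\deg_{\analytic}$ is a fortiori a realizer of $\dbnd_{\analytic}$.

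For $\deg_{\analytic}\leqW\CCN$, I use \Cref{lemma:computing germ} to compute the power-series expansion $(a_k)_{k\in\NN}\in\CC^\NN$ of the given analytic function $f$ around $0$. Since $f$ is a polynomial, only finitely many $a_k$ are nonzero, so $S:=\{k\in\NN\mid a_k\neq 0\}$ is a finite set. The predicate $a_k\neq 0$ is $\Sigma^0_1$ on the computable complex numbers, hence from the sequence $(a_k)_{k\in\NN}$ one obtains a $\OO(\NN)$-name of the bounded set $S$, and $\deg f=\max S$. Now \Cref{lemma:cn} supplies the reduction to $\CCN$.

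The main work is the reduction $\CCN\leqW\dbnd_{\analytic}$, which by \Cref{lemma:count} reduces to showing $\Count\leqW\dbnd_{\analytic}$. Given $p\in\dom(\Count)$, I will construct the polynomial
\[
f_p(x):=\sum_{n\in\supp(p)} 2^{-2^n}x^n,
\]
which is genuinely a polynomial because $\supp(p)$ is finite, and has degree $\max\supp(p)$. The key observations are that (i) the truncated sums $g_t(x):=\sum_{n\le t,\,p(n)>0}2^{-2^n}x^n$ converge uniformly to $f_p$ on the closed disc of radius $2$ with error bounded by the convergent tail $\sum_{n>t}2^{-2^n+n}$, so $f_p$ is computable as element of $\cont$ from $p$, and (ii) on the same disc one has
\[
|f_p(x)|\le\sum_{n\ge 0}2^{-2^n+n} < 2 ,
\]
uniformly in $p$. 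Since $r_0=2$, the constant advice $q(0):=2$ is valid for every input. Thus the pre-processor $H$ can simply output $2$ followed by a $\cont$-name of $f_p$, giving a bona fide $\analytic$-name. If the oracle returns $d\ge\deg f_p=\max\supp(p)$, the post-processor reads only $p(0),\dots,p(d)$ and outputs $\#\{n\le d\mid p(n)>0\}=\card{\supp(p)}=\Count(p)$.

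The step I expect to require the most care is verifying, once and for all, that the fixed integer advice $m=2$ really is valid for every input $p$, i.e.\ that the doubly exponential decay $2^{-2^n}$ is aggressive enough to make the sum bounded by $m$ on $\overline{U_m}$ regardless of which finite set $\supp(p)$ turns out to be; the estimate above shows this with room to spare. The remainder of the argument is a straightforward assembly of pre- and post-processor.
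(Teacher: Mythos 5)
Your proposal is correct and takes essentially the same approach as the paper: the same reduction cycle $\CCN \leqW \dbnd_{\analytic} \leqW \deg_{\analytic} \leqW \CCN$, with the first leg obtained by encoding the input into the support of a polynomial whose coefficients decay fast enough that a fixed advice constant ($2$) is always valid, and the last leg obtained by composing $\Germ$ with an enumeration of the nonzero coefficients and $\max$. The only cosmetic difference is that you reduce from $\Count$ (\Cref{lemma:count}) where the paper reduces from $\Bound$ (\Cref{lemma:cn}), and you use coefficients $2^{-2^n}$ on the support of $p$ rather than the paper's $2^{-(n+p(n))}X^{p(n)}$; both choices work.
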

			\begin{proof}
				Build a circle of Weihrauch reductions:
				\begin{description}
					\item[$\CCN\leqW\dbnd_{\analytic}$:] Use \Cref{lemma:cn} and reduce to $\Bound$ instead.
					Thus, let $p$ be an enumeration of some bounded subset of the natural numbers.
					Define a polynomial $P$ as follows:
					\[ P(X) := \sum_{n\in\NN} 2^{-(n + p(n))} X^{p(n)}. \]
					One readily verifies that a $\analytic$-name of the function $f$ corresponding to $P$ can be computed from $p$: A $\cont$-name of $f$ is easy to get hold of as the coefficients fall fast enough with $n$, and it is easy to check that $2$ is an allowed value of $\Advc(f)$.
					Let the pre-processor $H$ be a realizer of this assignment.

					Obviously $\dbnd_{\analytic}(f)$ is an upper bound of the set enumerated by $p$.
					This means that the choice $K(p,q):=q$ for the post-processor results in a Weihrauch reduction.
					\item[$\dbnd_{\analytic}\leqW\deg_{\analytic}$:]
					Is trivial: Using the identity as pre-processor and the second projection as post-processor will do.
					\item[$\deg_{\analytic}\leqW\CCN$:]
					By \Cref{lemma:cn} replace $\CCN$ with $\max$.
					Let $p$ be a $\analytic$-name of the function corresponding to some polynomial $P$.
					Shifting the name will result in a $\cont$-name of $P$ and by \Cref{lemma:computing germ} a $\CC^\NN$-name $q$ of the series of coefficients of $P$ can be computed from this.
					Let $d_n$ denote the enumeration of the rational elements of $\CC$ that was fixed for the definition of the representation of $\CC$.
					Define the pre-processor $H$ as follows:
					\[ H(p)(\langle m,n\rangle) := \begin{cases} m+1 &\text{ if }\abs{d_{q(\langle m,n\rangle)}} > 2^{-n} \\ 0 & \text{ otherwise}.\end{cases} \]
					This pre-processor is computable and $H(p)$ enumerates the set of indices $k$ such that $a_k$ is not zero.
					Therefore, applying $\max$ will result in the degree of the polynomial and $K(p,q):=q$ can be chosen as post-processor of a Weihrauch reduction.
				\end{description}
			\end{proof}

			From the proof of the previous theorem it can be seen, that stepping down from analytic to continuous functions is not an issue.
			For sake of completeness we add a slight tightening of the third item of \Cref{thm:main functions} and state this as theorem:

			\begin{theorem}\label{resu:polynomials as continuous functions}
				The following are Weihrauch-equivalent to $\CCN$:
				\begin{itemize}
					\item $\deg_{\cont}$, that is: Given a continuous function which happens to be a polynomial, find its degree.
					\item $\dbnd_{\cont}$, that is: Given an analytic function which happens to be a polynomial, find an upper bound of its degree.
					\item $\Advc|_{\CC[X]}$, that is: Given a continuous function which happens to be a polynomial, find the constant needed to represent it as analytic function.
				\end{itemize}
			\end{theorem}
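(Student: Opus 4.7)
I will prove the theorem by building the Weihrauch reduction circle
\[\CCN \leqW \Advc|_{\CC[X]} \leqW \dbnd_\cont \leqW \deg_\cont \leqW \CCN,\]
of which three legs are essentially routine. For $\deg_\cont \leqW \CCN$, apply $\Germ$ from \Cref{lemma:computing germ} to convert the $\cont$-name of $P$ into a $\CC^\NN$-name of its coefficient sequence, observe that $\{k \mid a_k \neq 0\}$ is then semidecidable and bounded (since $P$ is a polynomial), and recover $\deg P$ by applying $\max \equivW \CCN$ from \Cref{lemma:cn}. The step $\dbnd_\cont \leqW \deg_\cont$ is just the second projection. For $\Advc|_{\CC[X]} \leqW \dbnd_\cont$, use the identity as pre-processor; the post-processor receives a degree bound $N' \geq \deg P$ together with the $\cont$-name of $P$, computes a rational upper bound $B$ on $\|P\|_\infty$, and outputs $m := \max(N', \lceil 2B(N'+1)\rceil)$. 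The estimate $\sup_{U_m}|P| \leq B(N'+1) r_m^{N'} \leq 2B(N'+1) \leq m$ (using $r_m^{N'} < 2$ whenever $m \geq N'$) then certifies $m \in \Advc(P)$.

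The crux is $\CCN \leqW \Advc|_{\CC[X]}$, which I would obtain by porting the proof of $\CCN \leqW \Diff_1$ from \Cref{thm:main functions} to polynomials and then composing with the restriction to polynomials of the computable reduction $\Diff_1 \leqW \Advc$ from that same theorem. By \Cref{lemma:count} it suffices to reduce $\Count$. For $p \in \dom(\Count)$ define
\[P_p(X) := \sum_{n\,:\,p(n)>0} \frac{X^{2^n}}{2^n}.\]
This is a polynomial of degree $2^{\max\supp p}$ with $\|P_p\|_D \leq 2$; its truncations up to index $T$ approximate $P_p$ in sup norm on $D$ at rate $2^{-T}$, so a $\cont$-name of $P_p$ is computable from $p$. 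Term-by-term differentiation gives $P_p'(1) = \sum_{n \in \supp p} 1 = \Count(p)$. The post-processor combines any $m \in \Advc(P_p)$ returned by the realizer with the $\cont$-name of $P_p$ into an $\analytic$-name, applies the computable differentiation operator from \Cref{thm:differntiation}, reads off the resulting $\cont$-name of $P_p'$ and evaluates at $1$ to recover $\Count(p)$.

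The main technical obstacle is the choice of building blocks $P_n(X) := X^{2^n}/2^n$, which must play the role of the non-polynomial functions $f_n(x) = (x - x_n)^{-2^{n+1}}$ used in the proof of \Cref{thm:main functions}: each $P_n$ is $\cont$-small ($\|P_n\|_D = 2^{-n}$) while satisfying $P_n'(1) = 1$. Bernstein's inequality $|P'(1)| \leq \deg(P)\,\|P\|_D$ for polynomials on the unit disk forces any such $P_n$ to have degree at least $2^n$, and the choice $\deg P_n = 2^n$ is precisely what makes the series $\sum_{n \in \supp p} P_n$ summable in the $\cont$-norm while preserving the crucial identity $P_p'(1) = \Count(p)$.
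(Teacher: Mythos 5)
Your proof is correct, but it is organized quite differently from the paper's. The paper disposes of $\deg_{\cont}$ and $\dbnd_{\cont}$ by observing that the reductions already constructed for \Cref{resu:polynomials as analytic functions} pass through $\cont$-names anyway, and then treats $\Advc|_{\CC[X]}$ separately: the upper bound is inherited from \Cref{thm:main functions} because restriction can only make a problem easier, and the lower bound is obtained by replacing the functions $f_n(x)=(x-x_n)^{-2^{n+1}}$ in the proof of $\CCN\leqW\Diff_1$ by sufficiently good rational polynomial approximations. You instead close a single four-term cycle, which forces you to supply two pieces the paper never writes down. First, the leg $\Advc|_{\CC[X]}\leqW\dbnd_{\cont}$ is new and is a nice quantitative observation: a degree bound $N'$ together with a computable bound $B$ on $\norm{P}_D$ yields, via the Cauchy estimates $\abs{a_k}\leq B$ (which you use implicitly and should state), the explicit advice value $m=\max(N',\lceil 2B(N'+1)\rceil)$, since $r_m^{N'}=2^{N'/(m+1)}<2$ for $m\geq N'$. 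This makes the passage from a degree bound to an $\analytic$-name computable for polynomials, which is worth knowing in its own right. Second, your building blocks $P_n(X)=X^{2^n}/2^n$ replace the paper's approximation argument by a fully explicit construction: the identities $\norm{P_n}_D=2^{-n}$ and $P_n'(1)=1$ give $P_p'(1)=\Count(p)$ directly, with no need to control how well a polynomial approximant tracks $f_n$ and $f_n'$ simultaneously. The Bernstein-inequality remark correctly explains why the degrees must grow exponentially, though it is only motivation. Each approach buys something: the paper's is shorter because it leans on earlier proofs, yours is self-contained, makes the advice computation effective rather than merely reducible, and avoids the (slightly glossed-over) verification that the polynomial approximants in the paper's sketch still satisfy the required bounds.
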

			\begin{proof}
				Weihrauch equivalence of the first two bullets to $\CCN$ follows directly from the proofs of \Cref{resu:polynomials as analytic functions}.
				For the last item first note, that the Weihrauch reduction $\Advc\leqW\CCN$ constructed in \Cref{thm:main functions} is also a Weihrauch reduction showing $\Advc|_{\CC[X]}\leqW\CCN$.
				This is generally true for restrictions.
				On the other hand, the sequence $f_n$ of analytic functions in the proof of the reduction $\CCN\leqW\Advc$ in the same theorem may be replaced by rational polynomials that approximate the functions and their derivative well enough.
				This way, the constructed function $f$ is a polynomial and the reduction a Weihrauch reduction to the restriction $\Advc|_{\CC[X]}$.
			\end{proof}

			$\dbnd_{\analytic}$ may be regarded as the advice function of $\CC[X]$ over $\analytic$: The representation where a function $p$ is a name of a polynomial $P$ if and only if $p(0) = \dbnd_{\analytic}$ and $n\mapsto p(n+1)$ is a $\analytic$-name of $P$ is computationally equivalent to the representation of $\CC[X]$.
			The same way, $\dbnd_{\cont}$ can be considered an advice function of $\CC[X]$ over $\cont$.

			\Cref{figure:reductions for polynomials} illustrates \Cref{resu:computable operations on the polynomials}, \Cref{resu:lpo*} and \Cref{resu:polynomials as continuous functions,resu:polynomials as analytic functions}.
			\begin{figure}
				\centering
				\begin{tikzpicture}
					\node at (-5.75,0) {$\deg$};
					\node at (-5,0) {$\NN$};
					\node at (0,1.5) {$\CC_m[X]$};
					\draw[color=blue,->] (-.75,1.5) -- (-4.75,.15);
					\draw[color=blue,->] (.75,1.5) -- (4.75,.15);
					\draw[color=blue,->] (.2,1.3) -- (.2,.7);
					\node at (.45, 1) {\textcolor{blue}{$\id$}};
					\node at (0,.5) {$\CC[X]$};
					\draw[color=black,dotted,->] (-.75,.5) -- (-4.75,.05);
					\draw[color=blue,->] (.75,.5) -- (4.75,.05);
					\draw[color=blue,->] (.2,.3) -- (.2,-.3);
					\node at (.45, 0) {\textcolor{blue}{$\id$}};
					\draw[color=black,dotted,->] (-.2,.7) -- (-.2,1.3);
					\node at (-.75, 1) {\textcolor{black}{$\Monic$}};
					\node at (0,-.5) {$\analytic|_{\CC[X]}$};
					\draw[color=red,dashed,->] (-1,-.5) -- (-4.75,-.05);
					\draw[color=red,dashed,->] (1,-.5) -- (4.75,-.05);
					\draw[color=blue,->] (.2,-.7) -- (.2,-1.3);
					\node at (.45, -1) {\textcolor{blue}{$\id$}};
					\draw[color=red,dashed,->] (-.2,-.3) -- (-.2,.3);
					\node at (-.45, 0) {\textcolor{red}{$\id$}};
					\node at (0,-1.5) {$\cont|_{\CC[X]}$};
					\draw[color=red,dashed,->] (-1,-1.5) -- (-4.75,-.15);
					\draw[color=red,dashed,->] (1,-1.5) -- (4.75,-.15);
					\draw[color=red,dashed,->] (-.2,-1.3) -- (-.2,-.7);
					\node at (-.45, -1) {\textcolor{red}{$\id$}};
					\node at (5,0) {$\NN$};
					\node at (5.75,0) {$\dbnd$};
					\node at (8,1) {\textcolor{red}{\dashuline{dash}: $\equivW \CCN$}};
					\node at (8,0) {\textcolor{black}{\dotuline{dots}: $\equivW \lpo^*$}};
					\node at (8,-1) {\textcolor{blue}{\underline{line}: $\equivW \id_{\Baire}$}};
				\end{tikzpicture}
				\caption{The result of \Cref{resu:computable operations on the polynomials}, \Cref{resu:lpo*} and \Cref{resu:polynomials as continuous functions,resu:polynomials as analytic functions}.}\label{figure:reductions for polynomials}
			\end{figure}

	\section{Test function spaces}\label{sec:schwartz}
		This section considers three spaces of test functions as a final example.
		\subsection{The spaces $\EE$, $\SF$ and $\TF$}
			Consider the spaces
			\[ \EE:=C^\infty(\mathbb{R}) \]
			of smooth functions,
			\[ \SF:=\big\{f\in \EE\mid \forall n,m \ \ \exists C \ \ \forall x: \abs{x^nf^{(m)}(x)}\leq C\big\} \]
			of Schwartz functions and
			\[ \TF:= C^\infty_0(\mathbb{R}) \]
			of bump functions, i.e.~those smooth functions that are zero outside some compact set.
			We use the slightly less common name \lq bump functions\rq\ for $\TF$ instead of the standard name \lq test functions\rq\ as all three spaces are called in \lq spaces of test functions\rq\ and $\TF\subseteq \SF\subseteq \EE$.
			The standard example of a function from $\TF$ is listed in \Cref{ex:a bump function} below.

			These spaces are in particular relevant as their dual spaces with respect to the topologies introduced below are the space $\TF'$ of distributions, $\SF'$ of tempered distributions and $\EE'$ of distributions with compact support.
			The spaces $\TF,\SF$ and $\EE$ are complete locally convex spaces.
			Recall that a topological vector space is called \demph{locally convex} if its topology is the initial topology of a family $(\norm{\cdot}_i)_{i\in I}$ of semi-norms.
			Set  $I:=\NN\times \NN$.
			In the case of $\EE$ the semi-norms
			\[ \norm f^\EE_{N,m} := \sup_{|x| \leq N}\abs{f^{(m)}(x)} \]
			can be used.
			For $\SF$ use the semi-norms defined via
			\[ \norm f ^\SF_{d,m} := \sup_{x \in \mathbb{R}} |x^df^{(m)}(x)|. \]
			Finally note that $\TF$ can be regarded as the union of all the spaces $\TF_K$ of smooth functions with support contained in the compact set $K$.
			The Spaces $\TF_K$ can be regarded topological vector spaces as the space $\EE$ above.
			Define a collection of semi-norms on $\TF$ as follows: A semi-norm $\norm\cdot$ on $\TF$ is contained in the collection if and only if it restricts to a continuous semi-norm on each of the spaces $\TF_K$.

			With respect to the locally convex topologies defined above the inclusions $\iota_\TF^\SF:\TF\hookrightarrow\SF$ and $\iota_\SF^\EE:\SF\hookrightarrow \EE$ are continuous.
			The corresponding subspace topologies, however, are strictly coarser.
			The index set of the families of semi-norms on $\EE$ and $\SF$ are both $\NN\times\NN$ and can be identified with $\NN$ using the pairing function from the introduction.
			This makes these spaces Fr\'echet spaces.
			Note that a Fr\'echet space can always be equipped with a translation invariant metric by setting
			\[ d(x,y) = \sum_{i \in \mathbb{N}} \frac{2^{-i} \norm{x - y}_i}{\norm{x - y}_i + 1}, \]
			where $(\norm\cdot_i)_{i\in \NN}$ is a countable family of semi-norms inducing the topology.
			On $\TF$ there does not exist a countable family of semi norms that induces the right topology:
			It is not metrizable.

		\subsection{Representing test functions}
			For representing these spaces first turn to $\EE$ and $\SF$, which can be handled as metric spaces.
			For the space $\EE$ of smooth functions choose as dense sub sequence the polynomials with rational coefficients.
			Equip $\EE$ with the corresponding metric representation.
			\begin{lemma}\label{resu:evaluation of the derivatives}
				An element of $\EE$ is computable if and only if the mapping
				\[ \NN\times \RR \to \RR, \quad(m,x) \mapsto f^{(m)}(x) \]
				is computable.
			\end{lemma}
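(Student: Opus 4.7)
The plan is to handle the two directions separately. In the forward direction the representation of $\EE$ as a metric space already packages the information we need; in the backward direction we must manufacture rational polynomial approximations out of pointwise derivative values, which is the main work.

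For the ``only if'' direction, assume $f$ is computable as an element of $\EE$. A name of $f$ produces a computable sequence of rational polynomials $(p_n)_{n\in\NN}$ with $d(f,p_n) < 2^{-n}$, where $d$ is the translation-invariant metric $d(g,h) = \sum_{i\in\NN} 2^{-i}\,\|g-h\|_i/(1+\|g-h\|_i)$ built from any fixed enumeration of the seminorms $\|\cdot\|^\EE_{N,m}$. Since a single term of the defining series can be at most $d(f,p_n)$, one extracts from $n$ an effective bound $\|f - p_n\|^\EE_{N,m} < \varepsilon_{N,m}(n)$ with $\varepsilon_{N,m}(n) \to 0$ uniformly in the inputs $N,m,n$. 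For any real $x$ with $|x|\le N$ this gives $|f^{(m)}(x) - p_n^{(m)}(x)| < \varepsilon_{N,m}(n)$, and since $p_n$ is a rational polynomial, $p_n^{(m)}(x)$ is computable from $m$, $n$ and a name of $x$. Choosing $N$ large enough from a name of $x$ and taking $n$ large enough from the desired output precision shows that $(m,x) \mapsto f^{(m)}(x)$ is computable.

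For the ``if'' direction, assume $(m,x)\mapsto f^{(m)}(x)$ is computable. To compute a $2^{-n}$-approximation of $f$ in the metric it suffices, by the tail estimate for the series defining $d$, to produce a single rational polynomial $p$ with $\|f-p\|^\EE_{N,m} < 2^{-n-2}$ for every $(N,m)$ with index at most $n$; let $k$ be a common bound on these indices. The plan is to approximate $f^{(k)}$ on $[-k,k]$ by a rational polynomial $q$ with $\sup_{|x|\le k}|f^{(k)}(x)-q(x)|<\varepsilon$, which is possible by the computable Weierstrass theorem applied to the computable continuous function $x\mapsto f^{(k)}(x)$ on the compact interval $[-k,k]$. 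Next, compute rational approximations $c_0,\ldots,c_{k-1}$ to $f(0),f'(0),\ldots,f^{(k-1)}(0)$ with error at most $\delta$ (possible since the evaluation map is computable at the rational point $0$), and iteratively antidifferentiate $q$, plugging in $c_{k-1},\ldots,c_0$ as integration constants. The result is a rational polynomial $p$ with $p^{(k)} = q$ and $p^{(j)}(0) = c_j$ for $j<k$. An induction on $j$ using
\[
|f^{(j)}(x) - p^{(j)}(x)| \;\le\; |f^{(j)}(0)-c_j| + \int_0^{|x|} |f^{(j+1)}(t) - p^{(j+1)}(t)|\,dt
\]
yields $\|f-p\|^\EE_{k,j} \le C_k(\varepsilon+\delta)$ for all $j\le k$ with a constant $C_k$ depending only on $k$. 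Choosing $\varepsilon,\delta$ small enough in terms of $n$ and $C_k$ gives the desired bound.

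The main obstacle is the second direction, and inside it the effective simultaneous approximation of $f$ and its first $k$ derivatives by one rational polynomial. The trick of approximating the top derivative $f^{(k)}$ and integrating back, coupled with rational integration constants from the lower derivatives evaluated at $0$, converts uniform control of the highest derivative into control of all lower seminorms with only a multiplicative loss; the rest is bookkeeping of effective moduli.
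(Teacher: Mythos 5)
Your proposal is correct and follows essentially the same route as the paper: in the forward direction you read off the rational polynomial approximants from the metric name and evaluate their derivatives, and in the converse direction you apply the computable Weierstrass theorem to the top derivative on the relevant interval and propagate the error down to the lower-order seminorms. You are in fact more explicit than the paper, which only invokes the mean value theorem to bound the variation of the lower derivatives and leaves the anchoring of the integration constants (your $c_j\approx f^{(j)}(0)$) implicit.
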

			\begin{proof}
				First assume that the mapping is computable.
				To approximate $f$ by polynomials in the translation invariant metric up to precision $2^{-n}$ first note that $N,m\leq\langle N,m\rangle$.
				Therefore it suffices to approximate all up to the $m$-th derivatives of $f$ on the interval $[-N,N]$ in supremum norm.
				Use the computable Weierstra\ss\ Approximation Theorem to find polynomial approximations of $f^{(m)}$ on $[-N,N]$ to precision $(2N)^{-m}2^{-n}$.
				By the Intermediate Value Theorem $f^{(m-1)}(x) -f^{(m-1)}(y) \leq \norm {f^{(m)}}_\infty \abs{x-y}\leq (2N)^{-m+1}2^{-n}$, and analogously none of the derivatives up to $m$ can vary by more than $2^{-n}$.

				For the other direction just get an upper bound $N$ for $x$, read polynomial approximations to the derivatives from the name of the function $f$ and evaluate them.
			\end{proof}
			The above can seen to be uniform in the sense that it proves the metric representation is computably equivalent to the following one:
			\begin{definition}\label{def:smooth functions reprsentation}
				Let $\EE$ denote the \demph{represented space of smooth functions}, where the representation is defined as follows: A function $q\in\Baire$ is a name of a smooth function $f$ if for all $N,m\in\NN$ it holds that $n\mapsto \varphi(\langle N,m ,n\rangle)$ is a $\C([-N,N])$-name of $f^{(m)}$.
			\end{definition}

			Computability on the space $\SF$ of Schwartz functions is investigated in \cite{zhong}\footnote{Prior to \cite{zhong}, in \cite{washihara} computability on $\SF$ was studied in the style of Pour-El and Richards \cite{MR1005942}.}.
			Note that the rational polynomials are not contained in the space $\SF$, thus they have to be replaced by truncating rational polynomials to rational intervals in a smooth way.
			Writing the corresponding sequence down explicitly is cumbersome, however, it can be done.
			An alternative approach to obtain a representation of $\SF$ is to effectivize the definition directly.
			\begin{definition}\label{def:schwartz representation}
				Let $\SF$ denote the \demph{represented space of Schwartz functions}, where the representation is defined as follows:
				A function $q\in\Baire$ is a name of a Schwartz function $f$ if for all $d,m,k \leq n$ it holds that
				\[ \forall x\in\RR: \abs x\geq q(2n) \quad \Rightarrow\quad\abs{x^df^{(m)}(x)}\leq2^{-k} \]
				and $n\mapsto q(2n+1)$ is a name of $f\in\EE$.
			\end{definition}
			A proof that this representation is computably equivalent to the metric representation of $\SF$ can be found as Lemma 5.3 (2) in \cite{zhong}.
			This representation adds information to an $\EE$-name of a function.
			However, in contrast to the other cases we encountered so far the information added is an element of Baire space and not discrete.
			
			Finally turn to the bump functions:
			The space $\TF$ not being metrizable does not prohibit the existence of a well behaved representation.
			For instance also the space $\analytic$ of analytic functions is also not metrizable.
			The topology of $\TF$ is, however, also not sequential.
			Thus, a representation can not be expected to induce the topology itself but at most its sequentialization.
			The question of how to represent $\TF$ is for instance studied in \cite{zhong,MR2207129}.
			\begin{definition}\label{def:bump function representation}
				Let $\TF$ denote the \demph{represented space of bump functions}, where the representation is defined as follows: A function $q\in\Baire$ is a name of a bump function $f$ if and only if the support of $f$ is contained in $[-q(0),q(0)]$ and $n\mapsto q(n+1)$ is a name of $f\in\EE$.
			\end{definition}
			\noindent\begin{minipage}{.55\textwidth}
				Again, the representation of $\TF$ arises from subspace representation of $\EE$ by enriching with discrete advice.
				The advice function given is by
				\[ \adv^\EE_{\TF}(f) = \{k\mid \supp(f)\subseteq [-k,k]\}. \]
				\begin{example}\label{ex:a bump function}
					Let $f$ denote the function defined by
					\[ f(x) := \begin{cases} e^{\frac{x^2}{x^2-1}} &\text{if }\abs x < 1 \\ 0& \text{otherwise} \end{cases} \]
					(compare \Cref{fig:the bump function f})
				\end{example}
			\end{minipage}
			\begin{minipage}{.4\textwidth}
				\begin{tikzpicture}
					\draw[->] (-3.5,0) -- (3.5,0);
					\draw[->] (0,-.2) -- (0,3.5) node[above] {};
					\node at (0,-.4) {$0$};
					\draw (.1,2.5) -- (-.1,2.5);
					\node at (.35,2.75) {$1$};
					\draw (2.5,.1) -- (2.5,-.1) node[below] {$1$};
					\draw (-2.5,.1) -- (-2.5,-.1) node[below] {$-1$};
					\draw[color=blue,thick] (-3,0)--(-2.5,0);
					\draw[color=blue,thick] (3,0)--(2.5,0);				
					\draw[color=blue,thick] plot[samples=100,domain=-0.999:.999] ({2.5*\x},{2.5*exp(\x*\x/(\x*\x-1)});
				\end{tikzpicture}
				
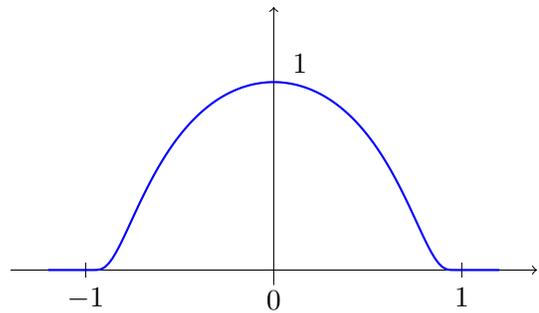
\captionof{figure}{The bump function $f$}\label{fig:the bump function f}
			\end{minipage}

			We need the following:
			\begin{lemma}\label{resu:f and its shifts}
				The function $f$ from \Cref{ex:a bump function} is computable as element of $\TF$.
				For its shifts $f_\lambda(x) := f(x-\lambda)$ it is true that
				\[ \norm{f_\lambda}^{\SF}_{d,m} \leq (\abs\lambda+1)^d (17m)^{4m} .\]
			\end{lemma}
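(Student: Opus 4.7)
The plan splits the lemma into two statements. For the first, computability of $f$ as an element of $\TF$, by \Cref{def:bump function representation} we need to provide an $\EE$-name of $f$ together with an integer bounding the support. Since $\supp(f) \subseteq [-1,1]$, the number $1$ works. For computability in $\EE$, \Cref{resu:evaluation of the derivatives} reduces the task to showing $(m,x) \mapsto f^{(m)}(x)$ is computable as a map $\NN \times \RR \to \RR$. On $(-1,1)$ the derivative $f^{(m)}$ is the product of $f$ with an explicit rational function in $x$ (computable symbolically by induction on $m$), while outside $[-1,1]$ it vanishes; at $\pm 1$ all derivatives vanish by the essential zero of $f$ there. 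A computable realizer is obtained by dovetailing the tests $\abs x < 1 - 2^{-k}$ (leading to evaluation via the explicit formula) against $\abs x > 1 - 2^{-k-1}$ (leading to an output approaching $0$), so the two branches overlap continuously.

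For the norm bound, I would first reduce to bounding $\sup\abs{f^{(m)}}$: since $f_\lambda^{(m)}(x) = f^{(m)}(x - \lambda)$ has support in $[\lambda - 1, \lambda + 1]$, the substitution $y = x - \lambda$ gives $\norm{f_\lambda}^\SF_{d,m} = \sup_{\abs y \leq 1} \abs{(y+\lambda)^d f^{(m)}(y)} \leq (\abs\lambda+1)^d \sup_y \abs{f^{(m)}(y)}$, so it remains to prove $\sup\abs{f^{(m)}} \leq (17m)^{4m}$. For this I would extend $f$ to a holomorphic function on $\CC \setminus \{\pm 1\}$ via the identity $f(z) = e \cdot \exp(1/(z^2 - 1))$, apply the Cauchy integral formula on a circle of radius $r$ around $x \in (-1,1)$ with $r < 1 - \abs x$, and estimate $\abs{z - 1} \geq 1 - x - r$ and $\abs{z + 1} \geq 1 + x - r$ on the circle so that $\abs{z^2 - 1} \geq (1 - r)^2 - x^2$. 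This yields $\abs{f^{(m)}(x)} \leq m!\, r^{-m} e \exp\bigl(((1-r)^2 - x^2)^{-1}\bigr)$. Choosing $r$ of order $1/m$ and restricting to $x$ with $1 - \abs x$ at least comparable to $r$ (for $x$ closer to the boundary, $f^{(m)}(x)$ itself is negligible because of the exponential decay of $f$ and its derivatives near the endpoints) balances the two factors.

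The main obstacle is tracking the specific constants to arrive at $(17m)^{4m}$. The qualitative statement that $\sup\abs{f^{(m)}}$ grows at most as $(Cm)^{Cm}$ for some absolute $C$ follows from the scheme above, but pinning down $C = 17$ and the exponent $4m$ requires careful bookkeeping in the optimization of $r$ and in combining $m! \leq m^m$ with the exponential factor $e^{1/((1-r)^2 - x^2)}$. A robust alternative is a direct combinatorial estimate via Fa\`a di Bruno's formula applied to the composition $f = \exp \circ h$ with $h(x) = x^2/(x^2 - 1)$: the derivatives $h^{(k)}$ can be expressed as rational functions with explicit numerators of controlled size, and summing over partitions of $m$ yields the same $(17m)^{4m}$-type bound once the numerical constants are carefully consolidated.
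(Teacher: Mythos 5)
Your first half --- computability of $f$ in $\TF$ --- is sound and essentially the paper's argument: an explicit (recursively generated) formula for $f^{(m)}$ on $(-1,1)$, vanishing outside, an appeal to \Cref{resu:evaluation of the derivatives}, and prepending the support bound $1$. The reduction of the norm estimate to $\sup\abs{f^{(m)}}\leq(17m)^{4m}$ via the substitution $y=x-\lambda$ also matches the paper. The gap is in how you propose to bound $\sup\abs{f^{(m)}}$. The Cauchy-formula route has a genuine problem near the endpoints: $\pm 1$ are \emph{essential singularities} of the extension $z\mapsto e\cdot\exp(1/(z^2-1))$, and on a circle of radius $r$ about a real $x$ close to $1$ the modulus $\abs{\exp(1/(z^2-1))}=\exp(\operatorname{Re}(1/(z^2-1)))$ is only controlled by $\exp(1/\abs{z^2-1})$, which blows up as the circle approaches the singularity --- the real-axis decay of $f$ does not survive off the axis. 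Your proposed remedy, ``for $x$ closer to the boundary, $f^{(m)}(x)$ itself is negligible because of the exponential decay of $f$ and its derivatives near the endpoints,'' is circular: the decay of the \emph{derivatives} near $\pm1$ is precisely what must be proved, and it is the only delicate part of the estimate. On top of this, you concede that the specific bound $(17m)^{4m}$ is not derived; since the lemma asserts that bound (and the later proofs quote it), the quantitative claim is left open.

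The paper avoids both difficulties by staying on the real line: it derives the recursion $f^{(n)}(x)=p_n(x)f(x)(1-x^2)^{-2n}$ with $p_0=1$ and $p_{n+1}=(1-x^2)^2p_n'+2((2n-1)x-2nx^3)p_n$, bounds $\deg p_n=3n$ and the coefficients of $p_n$ by $(17n-7)^n$, and then maximizes $(1-x^2)^{-2n}f(x)$ exactly by locating its critical points (the product tends to $0$ at $\pm1$ by l'Hospital, so the boundary causes no trouble), obtaining $\bnorm{f^{(n)}}_\infty\leq 3n(17n-7)^n(n^{4n}+1)\leq(17n)^{4n}$. Your Fa\`a di Bruno fallback is essentially this computation in disguise and would close the gap, but as written it is only a plan; to complete the proof you need to actually carry out the coefficient bookkeeping (or an equivalent explicit estimate) rather than defer it.
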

			This can be pieced together from \cite{zhong} or \cite{MR3377508}.
			We give a direct proof.
			\begin{proof}
				That $f$ is computable is clear from its definition.
				A proof that $f$ is infinitely often differentiable is an standard exercise and left to the reader.
				A direct computation of $f'$ and an easy induction show that for $\abs{x}<1$
				\[ f^{(n)}(x) = p_n f(x) (1-x^2)^{-2n}, \]
				where $p_n$ are integer polynomials recursively defined as follows:
				\[ p_0 = 1 \quad\text{and}\quad p_{n+1}:= (1-x^2)^2 p_n' + 2((2n-1)x - 2nx^3)p_n. \]
				These formulas show that the mapping $(m,x)\mapsto f^{(m)}(x)$ is computable.
				The computability of $f$ in $\EE$ now follows from \Cref{resu:evaluation of the derivatives} and to find a computable $\TF$-name it suffices to add $1$ to the front of the computable $\EE$-name.

				Further computations show that $\deg(p_n) = 3n$, that for $n>0$ the absolute value of each of its coefficients is bonded by $(17n-7)^n$.
				From applying l'Hospital's rule it can be seen that $(1-x^2)^{-2n}f(x)\to 0$ for $\abs x\to 1$.
				Thus the suprema can be found by searching for zeros of the derivative.
				They are
				\[ x_1 := \frac1{2n} + \sqrt{\frac1{4n^2} +1} \quad\text{and}\quad x_2 =\frac1{2n} - \sqrt{\frac1{4n^2} +1}\quad\text{with}\quad (1-x_{1/2}^2)^{-2n} f(x_{1/2}) \leq n^{4n}+1. \]
				It follows that
				\[ \bnorm{f^{(n)}}_\infty \leq 3n(17n-7)^n(n^{4n}+1)\leq (17n)^{4n} \]
				and thus
				\[ \norm{f_\lambda}^{\SF}_{d,m} = \sup_{x\in\RR} \abs{x^d f^{(m)}(x-\lambda)} = \sup_{x\in[-1,1]} \abs{(x+\lambda)^df^{(m)}(x)} \leq (\abs\lambda+1)^d(17m)^{4m} \]
			\end{proof}

		\subsection{Inclusions}
			We are now ready to investigate the inclusion maps and their inverses between these spaces.
			Follow \cite[Proposition 5.4]{zhong} to see that the inclusion $\iota_{\TF}^\SF : \TF\hookrightarrow\SF$ is computable:
			Given a $\TF$-name $q$ of $F$ obtain the values of an $\SF$-name of $f$ (compare \Cref{def:schwartz representation}) on even numbers by $2n\mapsto q(0)$ and the values on odd numbers by $2n+1\mapsto q(n+1)$.
			That the map $\iota_{\SF}^\EE :\SF \hookrightarrow \EE$ is computable follows from comparing the metric representations.
			Note that $\iota_\TF^\SF$ and $\iota_\SF^\EE$ are injective thus they have unique partial sections $\pi_\SF^\TF$ and $\pi_\EE^\SF$ with domains $\TF\subseteq \SF$ resp.\ $\SF\subseteq \EE$.
			As mentioned before the subspace topologies are strictly finer, thus the sections are not continuous.

			\begin{proposition}
				The following are Weihrauch-equivalent to $\C_\mathbb{N}$:
				\begin{itemize}
					\item $\pi_\SF^\TF: \subseteq \SF\to \TF$, the partial inverse of the embedding $\TF\hookrightarrow \SF$.
					\item $\pi_\EE^\TF := \pi_\SF^\TF\circ\pi_\EE^\SF:\subseteq \EE \to \TF$, the partial inverse of the embedding $\TF\hookrightarrow \EE$.
				\end{itemize}
				\begin{proof}
					Exhibit a cycle of Weihrauch reductions:
					\begin{description}
						\item[$\C_\mathbb{N} \leqW \pi_\SF^\TF$:] Recall, that by Lemma \ref{lemma:cn} the function $\CCN$ may be replaced the function $\Bound:\subseteq \OO(\NN) \mto\NN$ defined on the finite sets in the obvious way.

						Recall the function $f$ from \Cref{ex:a bump function} and its shifts $f_{\lambda}(x):= f(x-\lambda)$.						
						Let the pre-processor $H$ be a realizer of the function mapping a string function $p$ to the function
						\[ g = \sum_{i\in \NN} 2^{-i-1}(2p(i)+1)^{-i}(17 i)^{-4 i} f_{2p(i)}. \]
						Let $g_k$ denote the $k$-th partial sum of $g$.
						Thus
						\begin{align*}
							d(g,g_k) & = \sum_{\langle d,m\rangle\in\NN} 2^{-\langle d,m\rangle} \frac{\norm{g-g_k}^\SF_{d,m}}{\norm{g-g_k}^\SF_{d,m}+1} \\
							& \leq \sum_{\footnotesize\begin{matrix} \langle d,m\rangle\in\NN \\ d\leq k\text{ and }m\leq k \end{matrix}} 2^{-\langle d,m\rangle} \norm{g-g_k}^\SF_{d,m} + \sum_{\footnotesize\begin{matrix} \langle d,m\rangle\in\NN \\ d> k\text{ or }m> k \end{matrix}} 2^{-\langle d,m\rangle}.
						\end{align*}
						Here the latter sum is obviously smaller than $2^{-k-1}$ and the first can be estimated to be smaller by using the definition of $g_k$ and the estimate of the semi-norms of $f_\lambda$ from \Cref{resu:f and its shifts}.
						Now the sequence $(f_i)_{i\in\NN}$ is a computable element of $\TF^\NN$ and therefore also of $\SF^\NN$ and the sequence $g_k$ can be computed from this sequence and $p$.
						This suffices to compute a $\SF$-name of $g$ from $p$.

						From the definition of $g$ it is clear that any bound on the support of $g$ is also a bound for the values of $p$ and therefore a valid return value of $\Bound$.
						The post-processor $K$ can be chosen as the second projection.
					\item[$\pi_\SF^\TF \leqW \pi_\EE^\TF$:] Follows from the computability of $\pi_\EE^\SF$.
					\item[$\pi_\EE^\TF \leqW \CCN$:]
						Assume we are given an $\EE$-name of some $f\in\DD$.
						Let $(q_n)_{n\in\NN}$ be a standard enumeration of the rationals.
						Let the pre-processor $H$ map $f$ to the set the set
						\[ H(p) := \{K\mid\exists n \forall m: f(q_m) > 2^{-n} \Rightarrow q_m\leq K\}. \]
						This set is not empty since any bound on the support of $f$ fulfills the condition for all $n$.
						On the other hand, if the condition is violated by some $m$, then we can get a polynomial $\frac{2^{-n}-f(q_m)}2$-approximation valid on $[-q_m,q_m]$ from the $\EE$-name and evaluate it on $q_m$ with this the same precision to witness the violation.
						Therefore $H$ is computable.

						Applying $\CCN$ to the set returns a bound of the support and thus the post-processor $K$ can be chosen to be the projection to the second argument.
					\end{description}
				\end{proof}
			\end{proposition}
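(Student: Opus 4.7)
The plan is to establish the cycle $\CCN \leqW \pi_\SF^\TF \leqW \pi_\EE^\TF \leqW \CCN$ of Weihrauch reductions, using the two alternative representatives of $\CCN$ provided by \Cref{lemma:cn}: $\operatorname{Bound}$ on the top leg and closed choice itself on the bottom leg.

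For $\CCN\leqW\pi_\SF^\TF$ I would first replace $\CCN$ by $\operatorname{Bound}$. Given an $\OO(\NN)$-name $p$ of a finite set, I would use the bump function $f$ from \Cref{ex:a bump function} and produce the Schwartz function
\[ g := \sum_{i\in\NN} 2^{-i-1}(2p(i)+1)^{-i}(17i)^{-4i}\, f_{2p(i)}, \]
where the shifts $f_{\lambda}$ are placed so that $\supp(g)$ is unbounded precisely when the range of $p$ is unbounded. The key point is that the coefficients are chosen so that the $k$-th partial sum $g_k$ approximates $g$ to a prescribed precision in the Fr\'echet metric on $\SF$: the semi-norm estimate $\norm{f_\lambda}^\SF_{d,m}\leq(\abs\lambda+1)^d(17m)^{4m}$ from \Cref{resu:f and its shifts} cancels the factors in the coefficients, giving fast convergence for each semi-norm $\norm\cdot^\SF_{d,m}$ with $d,m\leq k$, while the tail of the metric sum contributes at most $2^{-k-1}$. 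Hence an $\SF$-name of $g$ is computable from $p$. Applying $\pi_\SF^\TF$ to $g$ produces a $\TF$-name, whose advice component is a support bound for $g$ and therefore also a bound for the range of $p$; the second projection finishes the reduction.

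The reduction $\pi_\SF^\TF\leqW\pi_\EE^\TF$ is immediate from the computability of $\pi_\EE^\SF$ noted before the proposition: precompose an $\SF$-name with a realizer of $\iota_\SF^\EE$ and feed the result into $\pi_\EE^\TF$.

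For $\pi_\EE^\TF\leqW\CCN$ I would let $(q_m)_{m\in\NN}$ enumerate the rationals and, given an $\EE$-name of $f\in\TF$, compute an $\A(\NN)$-name of the set
\[ A := \{K\in\NN \mid \forall m\, \forall n:\, \abs{f(q_m)}>2^{-n} \,\Rightarrow\, \abs{q_m}\leq K\}. \]
This set is nonempty because any integer support bound of $f$ lies in $A$. To enumerate the complement, I would dovetail over $m,n,K$: a failure $\abs{f(q_m)}>2^{-n}$ with $\abs{q_m}>K$ can be certified computably by evaluating $f$ at $q_m$ to sufficient precision using the $\EE$-name (the point evaluation being computable via \Cref{resu:evaluation of the derivatives}). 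Applying $\CCN$ to $A$ yields a support bound, which together with the original $\EE$-name gives a $\TF$-name of $f$ after prepending the bound as $q(0)$; the post-processor is again the second projection.

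The main obstacle is verifying the semi-norm estimates in the first reduction cleanly enough to show that the constructed sum $g$ actually lies in $\SF$ and that the partial sums converge effectively in the metric; once \Cref{resu:f and its shifts} is in hand this amounts to a straightforward splitting of the metric series into a finite head (bounded via the explicit estimate) and a geometric tail.
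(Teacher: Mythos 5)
Your proposal follows the paper's proof essentially step for step: the same cycle $\CCN\leqW\pi_\SF^\TF\leqW\pi_\EE^\TF\leqW\CCN$, the same weighted sum of shifted bumps $g=\sum_i 2^{-i-1}(2p(i)+1)^{-i}(17i)^{-4i}f_{2p(i)}$ with convergence controlled by \Cref{resu:f and its shifts}, and the same $\A(\NN)$-set of support bounds for the last leg. If anything, your formulations are slightly cleaner than the paper's (using $\iota_\SF^\EE$ rather than $\pi_\EE^\SF$ for the middle leg, and the universally quantified, correctly co-c.e.\ version of the set $A$ with $\abs{q_m}$), so there is nothing to add.
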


			\begin{theorem}
				For the partial inverse $\pi_{\EE}^\SF:\subseteq\EE\to \SF$ of the inclusion $\SF\hookrightarrow \EE$ it holds that $\Wd{\pi_{\EE}^\SF}=\lim$, i.e. that $\pi_\EE^\SF \equivW \CCN^\NN$.
				\begin{proof}
					First prove $\pi_\EE^\SF\leqW\CCN^\NN$.
					To obtain a $\SF$-name of $f$ from a $\EE$-name it necessary to find a sequence $a_n$ such that for all $d,m,k\leq n$ it holds that $\abs{x^df^{(m)}(x)}\leq 2^{-k}$ for all $x$ with $\abs{x}\geq 2^{a_n}$.
					Each $a_n$ can be found using one instance of $\CCN$:
					Note that the inequalities above are fulfilled for all real numbers if and only if they are fulfilled for all rational numbers.
					Therefore, it is possible to enumerate those natural numbers $b$ for which the condition is not fulfilled by searching for a rational counterexample.
					It follows that computing the sequence $a_n$, and therefore also an $\SF$-name of $f$, from an $\EE$-name of $f$ is Weihrauch reducible to $\CCN^\NN$.

					For the other direction by \Cref{lemma:count} it suffices to prove $\Bound^\NN\leqW\pi_\EE^\SF$.
					Note that $\Bound^\NN$ produces from $p\in\Baire$ such that
					\[ \forall m \in \NN :\max\{p(\langle n,m\rangle)\mid n\in\NN\} \]
					a $q\in\Baire$ such that $q(m)$ is a bound of the maximum above.

					Let $f$ be the function from \Cref{ex:a bump function} and $f_i$ its integer shifts.
					For $i,k\in\NN$ let $m_{i,k}$ denote the smallest integer such that $p(\langle m_{i,k},k\rangle)=i$.
					Consider the following sequence of functions:
					\[ g_k := \sum_{i\in\mathrm{img}(p)}\max\{x,2\}^{-k} f_{m_{i,k}+i} \]
					A $\EE^\NN$-name of this function sequence is computable from $p$ since $(f_i)$ is computable in $\DD^\NN$ and each restriction of $g_k$ to $[-N,N]$ only depends on the first $N$ values of $p$.
					Let the preprocessor $H$ be a realizer of this mapping.

					Due to the assumption $p\in\dom(\Bound^\NN)$ the functions $g_k$ have compact support.
					Furthermore, from the bounds from \Cref{resu:f and its shifts} it is easy to see that the function
					\[ g:= \sum_{k\in\NN} g_k \quad\text{fulfills}\quad \norm{g}^\SF_{d,m}\leq \sum_{i\in\mathrm{img}(p)} (1+m_{i,k} + i)^d(17m)^{4m} < \infty \]
					and is therefore contained in $\SF$.

					On the other hand all the values of the $g_k$ are positive and therefore
					\[ g(m_{i,k}+p(\langle i,k\rangle))\geq g_k(m_{i,k}+p(\langle i,k\rangle)) \geq 2^{-k} \]
					Therefore, if $q$ is an $\SF$-name of $g$, then for all $i$ it holds that $q(2k) \geq p(\langle i,k\rangle)$ and $q(2\cdot)$ is a valid return value for $\Bound^\NN$.
				\end{proof}
			\end{theorem}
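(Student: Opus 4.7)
The plan is to establish the Weihrauch equivalence $\pi_\EE^\SF \equivW \CCN^\NN$ by proving both reductions separately. Since $\widehat{\CCN} \equivW \lim$ is already recorded in the introduction, this gives $\Wd{\pi_\EE^\SF} = \lim$.

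For the upper bound $\pi_\EE^\SF \leqW \CCN^\NN$, I would work directly with the explicit form of \Cref{def:schwartz representation}. A $\SF$-name of $f$ interleaves an $\EE$-name on odd positions with, on even positions, a function $n \mapsto q(2n)$ witnessing that $|x^d f^{(m)}(x)| \leq 2^{-k}$ holds for every $d,m,k \leq n$ and every $|x| \geq q(2n)$. The $\EE$-name is computable from the input, so only the sequence of even entries needs to be extracted. For fixed $n$, the set of naturals $b$ that \emph{fail} this universal condition is the union over the finitely many triples $d,m,k \leq n$ of those $b$ for which some rational $x$ with $|x| \geq b$ witnesses $|x^d f^{(m)}(x)| > 2^{-k}$; by \Cref{resu:evaluation of the derivatives} these derivatives are computably evaluable from an $\EE$-name, so this bad set is effectively open. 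Hence a single invocation of $\CCN$ picks a valid $q(2n)$, and performing this in parallel across $n$ realises the reduction to $\CCN^\NN$.

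For the lower bound I would replace $\lim$ by $\Bound^\NN$ via \Cref{lemma:count} and \Cref{lemma:cn}, and for a given input $p \in \dom(\Bound^\NN)$ construct one function $g \in \SF$ whose Schwartz-decay data simultaneously encodes bounds for every row of $p$. The construction is a weighted double sum of integer shifts of the standard bump function from \Cref{ex:a bump function}: for each pair $(i,k)$, place a shift $f_{m_{i,k}+i}$, where $m_{i,k}$ is the least index with $p(\langle m_{i,k},k\rangle)=i$, scaled by an amplitude that decays fast in $k$ but keeps the bump of height at least $2^{-k}$. Because only finitely many integer shifts contribute on any fixed compact interval, an $\EE$-name of the partial sums in $k$ is computable from $p$, and hence so is an $\EE$-name of $g$. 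The quantitative semi-norm bound $\norm{f_\lambda}_{d,m}^\SF \leq (|\lambda|+1)^d (17m)^{4m}$ from \Cref{resu:f and its shifts} then guarantees $g \in \SF$, while non-negativity of the bumps forces $g(m_{i,k}+p(\langle i,k\rangle)) \geq 2^{-k}$; consequently any $\SF$-name $q$ of $g$ must satisfy $q(2k) \geq p(\langle i,k\rangle)$ for every $i$, making $k \mapsto q(2k)$ a valid response of $\Bound^\NN$ on $p$.

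The main obstacle is balancing two competing demands on the weights: each bump must remain tall enough to force the Schwartz-decay witness at level $k$ to dominate every entry of the $k$-th row of $p$, yet the double series must converge in every Schwartz semi-norm so that $g$ really lies in $\SF$. The explicit estimate of \Cref{resu:f and its shifts} is precisely what lets this balance be struck, since it bounds the $(d,m)$ semi-norm of each shifted bump by a polynomial in the shift size and a factor depending only on $m$, both of which can be absorbed against sufficiently rapid amplitude decay. Verifying convergence of the resulting triple sum is the only quantitative calculation; everything else is routine.
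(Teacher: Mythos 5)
Your upper bound $\pi_\EE^\SF\leqW\CCN^\NN$ matches the paper's argument and is fine. The lower bound, however, has a genuine gap precisely at the point you flag as ``the main obstacle'': the two demands you want to balance are incompatible for the construction you describe. You scale the bump $f_{m_{i,k}+i}$ by a \emph{constant} amplitude $c_{i,k}\geq 2^{-k}$ so that the forcing goes through the value of $g$ itself, i.e.\ $g(\lambda_{i,k})\geq 2^{-k}$ with $\lambda_{i,k}=m_{i,k}+i$. But then for $d\geq 1$ one has $\norm{g}^\SF_{d,0}=\sup_x\abs{x^dg(x)}\geq c_{i,k}(\lambda_{i,k}-1)^d\geq 2^{-k}(\lambda_{i,k}-1)^d$, and nothing prevents an input $p\in\dom(\Bound^\NN)$ from having, say, $p(\langle 0,k\rangle)=2^{2^k}$: each row is still bounded, so $p$ is a legitimate input, but $\lambda_{i,k}\geq 2^{2^k}$ while your amplitude may not drop below $2^{-k}$, whence $\norm{g}^\SF_{1,0}=\infty$ and $g\notin\SF$. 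No choice of constant amplitudes can repair this: the forcing condition pins the amplitude from below by $2^{-k}$ uniformly in $i$, while membership in $\SF$ requires it to decay faster than every polynomial in the shift size, and the shift sizes are not controlled in terms of $k$.

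The paper escapes this by not using constant amplitudes: the $k$-th layer is weighted by the \emph{function} $\max\{x,2\}^{-k}$, so the bump at $\lambda_{i,k}$ has height only about $\lambda_{i,k}^{-k}$, and the forcing is routed through the Schwartz semi-norm condition with $d=k$ rather than $d=0$: at the bump one has $\abs{x^kg(x)}$ of order $1>2^{-k}$, which still compels $q(2k)>\lambda_{i,k}\geq i$. At the same time, for each fixed $d$ all but the finitely many layers with $k\leq d+1$ contribute on the order of $x^{d-k}\leq x^{-2}$ to $\sup_x\abs{x^dg(x)}$, and the finitely many remaining layers are individually finite because each row of $p$ is bounded; this is what places $g$ in $\SF$. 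So the overall architecture of your reduction is the right one, but the forcing mechanism must exploit the polynomial weights $x^d$ appearing in the Schwartz semi-norms, not merely the size of $g$.
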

			
			Another important representative of the degree $\lim$ is the Turing jump $J : \Cantor \to \Cantor$ mapping $p$ to the Halting problem relative to $p$.
			As shown in \cite{brattka2}, $J \in \lim$ (essentially by a uniform version of Shoenfields Limit Lemma).
			This has the important consequence that whenever $\Wd{T} = \lim$, then there is some computable point $x \in \dom(T)$ such that $T(x)$ computes the Halting problem \cite{brattka11}.

			\begin{corollary}
				There exists a function $f : \mathbb{R} \to \mathbb{R}$ which is computable as an element of $\EE$, and which also is an element of $\SF$, but as the latter, computes the Halting problem.
			\end{corollary}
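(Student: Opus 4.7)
The plan is to apply the cited result from \cite{brattka11} directly to the operator $T := \pi_\EE^\SF$. By the preceding theorem we know $\Wd{\pi_\EE^\SF} = \lim$, so this result guarantees the existence of a computable $x \in \dom(\pi_\EE^\SF)$ such that $\pi_\EE^\SF(x)$ computes the Halting problem.

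I would then unpack what this means in concrete terms. A computable point $x$ in the represented space $\EE$ is precisely a smooth function $f : \RR \to \RR$ computable as an element of $\EE$. The requirement $x \in \dom(\pi_\EE^\SF)$ means that $f$, viewed set-theoretically, belongs to the subspace $\SF$ of $\EE$, i.e.\ $f$ is in fact a Schwartz function. The assertion that $\pi_\EE^\SF(x)$ computes the Halting problem means that any $\SF$-name of $f$ (in the sense of \Cref{def:schwartz representation}) computes the Halting problem; in particular $f$ is not computable as an element of $\SF$, and furthermore every name witnessing its Schwartzness carries at least the information of $\emptyset'$.

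Thus the same function $f$ witnesses the corollary: as an object of $\EE$ it is computable, while as an object of $\SF$ it computes the Halting problem. The only point requiring attention is that the quoted consequence of $\Wd{T} = \lim$ is formulated for single-valued $T$ on represented spaces in general; since $\pi_\EE^\SF$ is single-valued and the result is cited as applicable whenever $\Wd{T} = \lim$, no additional work is needed. There is essentially no obstacle: the corollary is a direct combination of the preceding theorem with the general fact about $\lim$ from \cite{brattka2,brattka11}, the only small bookkeeping being the translation between "computable point of $\dom(\pi_\EE^\SF)$" and "$\EE$-computable function that happens to be Schwartz".
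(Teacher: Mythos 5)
Your proposal is correct and is exactly the paper's intended argument: the corollary is stated as a direct consequence of the classification $\Wd{\pi_\EE^\SF}=\lim$ together with the cited fact that any $T$ with $\Wd{T}=\lim$ admits a computable point $x\in\dom(T)$ whose image computes the Halting problem. Your unpacking of what ``computable point of $\dom(\pi_\EE^\SF)$'' and ``$\pi_\EE^\SF(x)$ computes the Halting problem'' mean is accurate and matches the paper, which gives no further proof beyond this observation.
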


	\section{Conclusions \& Outlook}
		We have seen that some care is required when formulating statements about computability with respect to some more complicated function spaces in analysis:
		Computing a continuous function (that happens to be analytic) is easier than computing the same function as an analytic function, etc.
		For most the distinctions we have investigated, these differences are rather small:
		As the degree of $\C_\mathbb{N}$ preserves computability of points, an individual analytic function is a computable analytic function iff it is a computable continuous function.
		For Schwartz functions, however, the situation is different, and crucial distinctions already appear at the level of individual functions.

		The relevance of the choice of function spaces for computable analysis has been very prominent in the discussion of the computability of the wave equation:
		In \cite{pourel2,pourel3}, computable parameters were exhibited that forced the solution to take non-computable values at time $1$.
		This constituted a significant challenge for the philosophical discussion about computability and physics.
		A resolution was then offered in \cite{zhong2} by demonstrating that the solution operator for the wave equation is computable after all -- if one chooses the correct function spaces.

		The examples we have studied in this paper are by far not all that deserve attention.
		Based partially on the results in Section \ref{sec:schwartz}, one could contrast continuous functions, distributions and tempered distributions.
		Similarly, the relationship between continuous functions and $\mathcal{L}^1$ functions that happen to be continuous should be clarified.
		It is known, however, that translating from a continuous real function that has a continuous derivative to a $C^1(\mathbb{R})$ function is equivalent to $\lim$ \cite{stein}.

		We should point out that many of the results proved in \Cref{sec:analytic} work for more general domains:
		\Cref{lemma:computing germ} generalizes to any computable point of the interior of an arbitrary domain.
		It can be made a uniform statement by including the base point of a germ.
		In this case for the proof to go through computability of the distance function of the complement of the domain of the analytic function is needed.

		Another example is the part of \Cref{thm:main functions} that says finding a germ on the boundary is difficult.
		In this case a disc of finite radius touching the boundary in a computable point is needed.
		Alternatively, a simply connected bounded Lipshitz domain with a computable point in the boundary can be used.
		Also in this case it seems reasonable to assume that a uniform statement can be proven.

		\bibliographystyle{alpha}
		\bibliography{../bib}

\begin{thebibliography}{KMRZ15}

\bibitem[BdBP12]{MR2915694}
Vasco Brattka, Matthew de~Brecht, and Arno Pauly.
\newblock Closed choice and a uniform low basis theorem.
\newblock {\em Ann. Pure Appl. Logic}, 163(8):986--1008, 2012.

\bibitem[BG11a]{MR2760117}
Vasco Brattka and Guido Gherardi.
\newblock Effective choice and boundedness principles in computable analysis.
\newblock {\em Bull. Symbolic Logic}, 17(1):73--117, 2011.

\bibitem[BG11b]{brattka2}
Vasco Brattka and Guido Gherardi.
\newblock Weihrauch degrees, omniscience principles and weak computability.
\newblock {\em Journal of Symbolic Logic}, 76:143 -- 176, 2011.
\newblock arXiv:0905.4679.

\bibitem[BGH15]{MR3350999}
Vasco Brattka, Guido Gherardi, and Rupert H{\"o}lzl.
\newblock Probabilistic computability and choice.
\newblock {\em Inform. and Comput.}, 242:249--286, 2015.

\bibitem[Bra99]{brattka11}
Vasco Brattka.
\newblock Computable invariance.
\newblock {\em Theoretical Computer Science}, 210:3--20, 1999.

\bibitem[Bra05]{MR2099383}
Vasco Brattka.
\newblock Effective {B}orel measurability and reducibility of functions.
\newblock {\em MLQ Math. Log. Q.}, 51(1):19--44, 2005.

\bibitem[HM16]{mummert}
Jeffry Hirst and Carl Mummert.
\newblock Reverse mathematics of matroids.
\newblock arXiv 1604.04912, 2016.

\bibitem[KMRZ15]{MR3377508}
Akitoshi Kawamura, Norbert M{\"u}ller, Carsten R{\"o}snick, and Martin Ziegler.
\newblock Computational benefit of smoothness: {P}arameterized bit-complexity
  of numerical operators on analytic functions and {G}evrey's hierarchy.
\newblock {\em J. Complexity}, 31(5):689--714, 2015.

\bibitem[Ko91]{MR1137517}
Ker-I Ko.
\newblock {\em Complexity theory of real functions}.
\newblock Progress in Theoretical Computer Science. Birkh\"auser Boston, Inc.,
  Boston, MA, 1991.

\bibitem[Ko98]{MR1673610}
Ker-I Ko.
\newblock Polynomial-time computability in analysis.
\newblock In {\em Handbook of recursive mathematics, {V}ol.\ 2}, volume 139 of
  {\em Stud. Logic Found. Math.}, pages 1271--1317. North-Holland, Amsterdam,
  1998.

\bibitem[KS05]{MR2207129}
Daren Kunkle and Matthias Schr{\"o}der.
\newblock Some examples of non-metrizable spaces allowing a simple type-2
  complexity theory.
\newblock In {\em Proceedings of the 6th {W}orkshop on {C}omputability and
  {C}omplexity in {A}nalysis ({CCA} 2004)}, volume 120 of {\em Electron. Notes
  Theor. Comput. Sci.}, pages 111--123. Elsevier, Amsterdam, 2005.

\bibitem[LCL02]{MR1905263}
David Lester, Scott Chambers, and Heoi~Lee Lu.
\newblock A constructive algorithm for finding the exact roots of polynomials
  with computable real coefficients.
\newblock {\em Theoret. Comput. Sci.}, 279(1-2):51--64, 2002.

\bibitem[LRP15]{paulyleroux}
St\'ephane Le~Roux and Arno Pauly.
\newblock Finite choice, convex choice and finding roots.
\newblock {\em Logical Methods in Computer Science}, 2015.

\bibitem[M{\"u}l95]{Muller1995}
Norbert~Th. M{\"u}ller.
\newblock Constructive aspects of analytic functions.
\newblock In Ker-I Ko and Klaus Weihrauch, editors, {\em Computability and
  Complexity in Analysis}, volume 190 of {\em Informatik Berichte}, pages
  105--114. FernUniversit\"at Hagen, September 1995.
\newblock CCA Workshop, Hagen, August 19--20, 1995.

\bibitem[Myl92]{mylatz}
Uwe Mylatz.
\newblock Vergleich unstetiger {F}unktionen in der {A}nalysis.
\newblock Diplomarbeit, Fachbereich Informatik, FernUniversit{\"a}t Hagen,
  1992.

\bibitem[Myl06]{mylatzb}
Uwe Mylatz.
\newblock {\em Vergleich unstetiger {F}unktionen : ``Principle of Omniscience''
  \ und {V}ollst\"andigkeit in der {C}-{H}ierarchie}.
\newblock PhD thesis, Fernuniversit{\"a}t, Gesamthochschule in Hagen, Mai 2006.

\bibitem[NP16]{paulyneumann}
Eike Neumann and Arno Pauly.
\newblock A topological view on algebraic computations models.
\newblock arXiv:1602.08004, 2016.

\bibitem[Pau07]{paulymaster}
Arno Pauly.
\newblock Methoden zum {V}ergleich der {U}nstetigkeit von {F}unktionen.
\newblock Masters thesis, FernUniversit\"{a}t Hagen, 2007.

\bibitem[Pau09]{paulyoracletypetwo}
Arno Pauly.
\newblock Infinite oracle queries in type-2 machines (extended abstract).
\newblock arXiv:0907.3230v1, July 2009.

\bibitem[Pau10]{paulyreducibilitylattice}
Arno Pauly.
\newblock On the (semi)lattices induced by continuous reducibilities.
\newblock {\em Mathematical Logic Quarterly}, 56(5):488--502, 2010.

\bibitem[Pau16]{pauly-synthetic}
Arno Pauly.
\newblock On the topological aspects of the theory of represented spaces.
\newblock {\em Computability}, 5(2):159--180, 2016.

\bibitem[PdB13]{pauly-descriptive}
Arno Pauly and Matthew de~Brecht.
\newblock Towards synthetic descriptive set theory: An instantiation with
  represented spaces.
\newblock arXiv 1307.1850, 2013.

\bibitem[PdB14]{paulydebrecht}
Arno Pauly and Matthew de~Brecht.
\newblock Non-deterministic computation and the {J}ayne {R}ogers theorem.
\newblock {\em Electronic Proceedings in Theoretical Computer Science}, 143,
  2014.
\newblock DCM 2012.

\bibitem[PDF15]{pauly-fouche2}
Arno Pauly, George Davie, and Willem Fouch\'e.
\newblock Weihrauch-completeness for layerwise computability.
\newblock arXiv:1505.02091, 2015.

\bibitem[PER83]{pourel2}
Marian Pour-El and Ian Richards.
\newblock Computability and noncomputability in classical analysis.
\newblock {\em Transactions of the AMS}, 275:539--560, 1983.

\bibitem[PER89]{MR1005942}
Marian~B. Pour-El and J.~Ian Richards.
\newblock {\em Computability in analysis and physics}.
\newblock Perspectives in Mathematical Logic. Springer-Verlag, Berlin, 1989.

\bibitem[PEZ97]{pourel3}
Marian Pour-El and Ning Zhong.
\newblock The wave equation with computable initial data whose unique solution
  is nowhere computable.
\newblock {\em Mathematical Logic Quarterly}, 43(4):499--509, 1997.

\bibitem[PS16]{pauly-steinberg-csr}
Arno Pauly and Florian Steinberg.
\newblock Representations of analytic functions and weihrauch degrees.
\newblock In {\em Proceedings of \emph{Computer Science Russia} (CSR)}, 2016.
\newblock to appear.

\bibitem[Sch02]{schroder}
Matthias Schr\"{o}der.
\newblock Extended admissibility.
\newblock {\em Theoretical Computer Science}, 284(2):519--538, 2002.

\bibitem[Ste89]{stein}
Thorsten~von Stein.
\newblock Vergleich nicht konstruktiv l{\"o}sbarer {P}robleme in der
  {A}nalysis.
\newblock Diplomarbeit, Fachbereich Informatik, FernUniversit\"at Hagen, 1989.

\bibitem[Was99]{washihara}
Masako Washihara.
\newblock Computability and tempered distributions.
\newblock {\em Mathematica Japonica}, 50(1):1--8, 1999.

\bibitem[Wei00]{MR1795407}
Klaus Weihrauch.
\newblock {\em Computable analysis}.
\newblock Texts in Theoretical Computer Science. An EATCS Series.
  Springer-Verlag, Berlin, 2000.
\newblock An introduction.

\bibitem[WZ02]{zhong2}
Klaus Weihrauch and Ning Zhong.
\newblock Is wave propagation computable or can wave computers beat the
  {T}uring machine?
\newblock {\em Proceedings of the London Mathematical Society}, 85(2):312--332,
  2002.

\bibitem[Zie12]{MR2915702}
Martin Ziegler.
\newblock Real computation with least discrete advice: a complexity theory of
  nonuniform computability with applications to effective linear algebra.
\newblock {\em Ann. Pure Appl. Logic}, 163(8):1108--1139, 2012.

\bibitem[ZW03]{zhong}
Ning Zhong and Klaus Weihrauch.
\newblock Computability theory of generalized functions.
\newblock {\em Journal of the ACM}, 50(4):469--505, 2003.

\end{thebibliography}

	\subsubsection*{Acknowledgements.}
		The work has benefited from the Marie Curie International Research Staff Exchange Scheme \emph{Computable Analysis}, PIRSES-GA-2011- 294962. The first author was supported partially by the ERC inVEST (279499) project. The second author was supported by the International Research Training Group 1529 \lq Mathematical Fluid Dynamics\rq\ funded by the DFG and JSPS.

We are grateful to Matthias Schr\"oder for discussions on the subject.

\end{document}